\newtheorem{theorem}{Theorem}[section]
\newtheorem{remark}{Remark}[section]
\newtheorem{example}{Example}[section]
\newtheorem{proposition}{Proposition}[section]
\newtheorem{lemma}{Lemma}[section]
\theoremstyle{definition}
\newtheorem{definition}{Definition}
[section]
\newcommand{\noi}{\noindent}
\title{On a Generalization of the Christoffel Tree: Epichristoffel Trees}
\author{Abhishek Krishnamoorthy
\institute{Madras Christian College\\ Chennai, Tamil Nadu, India}
\institute{Department of Mathematics}
\email{abhishek@mcc.edu.in}
\and
Robinson Thamburaj
\institute{Madras Christian College\\
Chennai, Tamil Nadu, India}
\institute{Department of Mathematics}
\email{robinson@mcc.edu.in}
\and
Durairaj Gnanaraj Thomas
\institute{Madras Christian College\\
Chennai, Tamil Nadu, India}
\institute{Department of Mathematics}
\email{dgthomasmcc@yahoo.com}
}
\begin{document}
\maketitle
\begin{abstract}
Sturmian words form a family of one-sided infinite words over a binary alphabet that are obtained as a discretization of a line with an irrational slope starting from the origin. A finite version of this class of words called Christoffel words has been extensively studied for their interesting properties. It is a class of words that has a geometric and an algebraic definition, making it an intriguing topic of study for many mathematicians. Recently, a generalization of Christoffel words for an alphabet with 3 letters or more, called epichristoffel words, using episturmian morphisms has been studied, and many of the properties of Christoffel words have been shown to carry over to epichristoffel words; however, many properties are not shared by them as well. In this paper, we introduce the notion of an epichristoffel tree, which proves to be a useful tool in determining a subclass of epichristoffel words that share an important property of Christoffel words, which is the ability to factorize an epichristoffel word as a product of smaller epichristoffel words. We also use the epichristoffel tree to present some interesting results that help to better understand epichristoffel words.  
\end{abstract}

\section{Introduction}

Sturmian sequences have appeared several times in history in the works of several mathematicians such as Bernoulli, Christoffel and Markov. They are sequences over a 2-letters alphabet that code discrete lines, due to which they appear in different fields of mathematics such as discrete geometry and number theory. They also have the property of being balanced. Christoffel words are the finite version of these sequences and have been studied extensively \cite{Onaninvolution,Onchristoffelclasses,Observatio,Combonwords,Sturmianmorphisms,Algebraic}. For a comprehensive understanding of Christoffel words, we refer the readers to \cite{Combonwords},\cite{Fromchris} and \cite{Balanceprop}. These are the smallest words with respect to the lexicographic order which are conjugate to a finite standard Sturmian word. 

Genevieve Paquin has studied a generalization of Christoffel words called epichristoffel words, in which episturmian morphisms are used to determine if a word belongs to an epichristoffel class \cite{Onagen}. Although epichristoffel words share many of the same properties as Christoffel words, Genevieve raised some open problems regarding epichristoffel words. These include the ability to characterize the epichristoffel word of each conjugacy class, whether epichristoffel words satisfy a type of balanced property, and whether there is an epichristoffel word of any length over a k-letter alphabet for a fixed $k\geq3$. In this paper, we have partially answered such questions by introducing infinite binary trees called epichristoffel trees motivated by the definition of the Christoffel tree. We note that there can be different epichristoffel trees based on the epichristoffel word that serves as the root of the tree. It has been shown in \cite{Quelquesmots} that every non-trivial Christoffel word can be factorized in a unique way into two words where each one of them is a Christoffel word. This is called the standard factorization of the word. The Christoffel tree is an infinite binary tree in which each Christoffel word appears exactly once in its standard factorization form. Although the epichristoffel trees that we introduce in this article do not contain every epichristoffel word over a $k$-letter alphabet and not all epichristoffel words have a factorization into smaller epichristoffel words, we show that the epichristoffel tree can be a useful tool in helping to determine the existence of epichristoffel words of various lengths. This tree exhibits subclasses of epichristoffel words that do or do not satisfy a factorization property, such as Christoffel words. Through Theorem 5.1, we have shown that this tree provides a characterization of the epichristoffel word of a conjugacy class. The study of epichristoffel words is interesting because they seem to relate to the Fraenkel conjecture which states that for a $k$-letter alphabet, there exists a unique infinite word up to letter permutation and conjugation that is balanced and has pair-wise distinct letter frequencies. Thus, a better understanding of epichristoffel words might help prove this conjecture.

\section{Definitions and Notations}

Throughout this paper, we use the notation $A$ to denote a finite ordered alphabet. A finite word is an element of the free monoid $A^\ast$ and if $w = w[0] w[1] \ldots w[n-1]$, with $w[i] \in A$ then $w$ is said to be a finite word of length $n$. Unless specified otherwise every word discussed in this paper will be a finite word. The notation $|w|$ is used to denote the length of the word $w$ and the notation $|w|_{a_i}$ is used to determine the number of occurrences of the letter $a_i$ in $w$, where $a_i \in A$. By convention, the empty word is denoted by $\lambda$ and its length is 0.

The conjugacy class $[w]$ of a finite word $w$ of length $n$ is the set of all words of the form $w[i] w[i+1] \ldots w[n-1] w[0] \ldots w[i-1]$, for $0 \le i \le n-1$. If two words $w_{1}$ and $w_{2}$ are conjugate to each other, we denote it by $w_{1}\equiv w_{2}$. If $w$ is primitive and the smallest word in its conjugacy class with respect to the lexicographic order, then $w$ is called a Lyndon word. 

A word $f$ is a factor of a word $w$ if $w = pfs$ for some $p, s \in A^\ast$. $f$ is called a prefix or a suffix respectively if $p$ or $s$ is the empty word.  We refer the readers to \cite{Onagen} and \cite{Combonwords} for most of the basic notations and definitions used in this paper. 
\section{Christoffel Words and Epichristoffel Words}
\begin{definition}\cite{Combonwords}
The lower Christoffel path of slope $\frac{a}{b}$ over the binary alphabet $A=\{x,y\}$, where $a$ and $b$ are relatively prime positive integers is the path in the plane from (0, 0) to $(b, a)$ in the integer lattice $Z \times Z$ that satisfies the following two conditions:
\begin{itemize}
\item The path lies below the line segment that begins at (0, 0) and ends at $(b, a)$.

\item The region in the plane enclosed by the path and the line segment contains no other part of $Z \times Z$ besides those of the path.
\end{itemize}
By encoding every horizontal step in the lower Christoffel path by the letter $x$ and every vertical step in the lower Christoffel path by the letter $y$ we get word of length $a+b$ over a binary alphabet called the Christoffel word of slope $\frac{a}{b}$.
\end{definition}
\vspace{-1em}
\begin{figure}[H]
\centering
\includegraphics[scale=0.5]{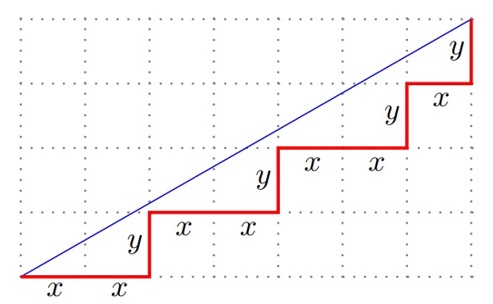}
\caption{The Christoffel word of slope $\frac{4}{7}$}
\label{fig1}
\end{figure}
\begin{lemma}\cite{Observatio}
A word $w$ is a Christoffel word if and only if $w$ is a balanced Lyndon word.
\end{lemma}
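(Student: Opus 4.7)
The plan is to prove the two directions separately and lean on the geometric definition in one direction and on classical properties of balanced words in the other.

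For the forward direction ($\Rightarrow$), assume $w$ is the Christoffel word of slope $\frac{a}{b}$ with $\gcd(a,b)=1$. First I would verify primitivity: since $|w|_{x}=b$, $|w|_{y}=a$ and $\gcd(a,b)=1$, $w$ cannot be a proper power. Next, for balance, I would use the geometric construction: for any factor $u$ of length $\ell$, the number of $y$'s in $u$ equals the number of lattice points of the form $(i,j)$ crossed vertically along a length-$\ell$ window of the path, which is either $\lfloor \ell a/(a+b)\rfloor$ or $\lceil \ell a/(a+b)\rceil$ because the path stays at distance $<1$ from the line of slope $a/b$. Hence any two factors of the same length differ in each letter count by at most one. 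Finally, to prove $w$ is Lyndon, I would observe that $w$ begins with $x$ and ends with $y$ (the first step from $(0,0)$ is horizontal, the last step into $(b,a)$ is vertical), and show that any proper non-trivial rotation corresponds to reading the same path starting from a lattice point strictly inside, which must begin with a block of $y$'s preceded by an $x$ strictly later than position $0$; using $x<y$, each such rotation is lex-greater than $w$.

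For the converse ($\Leftarrow$), assume $w$ is balanced and Lyndon. Set $a=|w|_{y}$ and $b=|w|_{x}$. The central claim I need is that $\gcd(a,b)=1$: if $d=\gcd(a,b)>1$, a standard pigeonhole argument on balanced words shows $w$ would be a $d$-th power of a balanced word, contradicting primitivity (which Lyndon implies). Let $C$ denote the Christoffel word of slope $\frac{a}{b}$; by the forward direction $C$ is also balanced and Lyndon, with the same letter frequencies as $w$. The key step is then to show $w \equiv C$. For this, I would invoke the classical fact that the conjugacy class of any primitive balanced binary word of slope $\frac{a}{b}$ (with $\gcd(a,b)=1$) is uniquely determined by the slope; two primitive balanced words with the same letter-counts are conjugate. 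Since $w$ and $C$ are both Lyndon, they are each the unique lex-minimum of their common conjugacy class, so $w=C$.

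The main obstacle is the conjugacy claim $w \equiv C$ in the converse. The clean way to establish it is via the mechanical-word characterization: every primitive balanced binary word appears as a factor of the lower mechanical word of slope $\frac{a}{a+b}$, and all length-$(a+b)$ factors of that mechanical word form a single conjugacy class (this is precisely the class $[C]$). I would cite this structural result as the workhorse lemma, rather than rederive it, since otherwise the argument would require reproducing a substantial slice of Sturmian combinatorics (three-distance theorem, continued fraction expansions, or the Stern--Brocot construction). With that lemma in hand, the proof reduces to the short Lyndon-minimality comparison described above.
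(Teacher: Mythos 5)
This lemma appears in the paper only as a quoted result with a citation and no proof, so there is no in‑paper argument to compare against; judging your proposal on its own terms, the forward direction follows the standard route and is essentially sound: coprimality of the letter counts gives primitivity, the window count $\lfloor \ell a/(a+b)\rfloor$ or $\lceil \ell a/(a+b)\rceil$ gives balance, and lexicographic minimality among rotations gives the Lyndon property (though your description of a proper rotation as ``beginning with a block of $y$'s preceded by an $x$ strictly later than position $0$'' is not accurate --- the rotation $xyxxy$ of $xxyxy$ begins with $x$ --- so that step still needs a genuine argument).

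The converse, however, has a real gap: both ``classical facts'' you lean on are false. First, a primitive balanced word need not have coprime letter counts: $xyyx$ is balanced and primitive with counts $(2,2)$, so the pigeonhole claim that $\gcd(a,b)=d>1$ forces $w$ to be a $d$-th power fails. Second, and fatally for the main step, two primitive balanced words with the same (even coprime) letter counts need not be conjugate: $yxxxy$ is primitive and balanced with three $x$'s and two $y$'s, yet its conjugacy class $\{yxxxy,\, xxxyy,\, xxyyx,\, xyyxx,\, yyxxx\}$ is disjoint from the class $\{xxyxy,\, xyxyx,\, yxyxx,\, xyxxy,\, yxxyx\}$ of the Christoffel word of slope $\tfrac{2}{3}$. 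Equivalently, $yxxxy$ is a factor of some Sturmian word of irrational slope near $\tfrac{2}{5}$ but not of the periodic mechanical word of slope exactly $\tfrac{2}{5}$, so the ``workhorse lemma'' you propose to cite is not available in the form you need. Note that every one of these counterexamples fails to be Lyndon, which is exactly the point: in your architecture the Lyndon hypothesis is used only at the very end, to select the lexicographic minimum of a conjugacy class already known to be the Christoffel class, whereas the theorem requires the Lyndon property to enter earlier, to exclude balanced primitive words that are not conjugate to any Christoffel word. As written, the converse direction does not go through.
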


\begin{definition}\cite{Combonwords}
The label of a point $(i, j)$ on the lower Christoffel path of slope $\frac{a}{b}$ is the number $\frac{ia-jb}{b}$ which represents the vertical distance from the point $(i, j)$ to the line segment from (0, 0) to $(b, a)$.
\end{definition}
\begin{figure}[h!]
\centering
\includegraphics[scale=0.5]{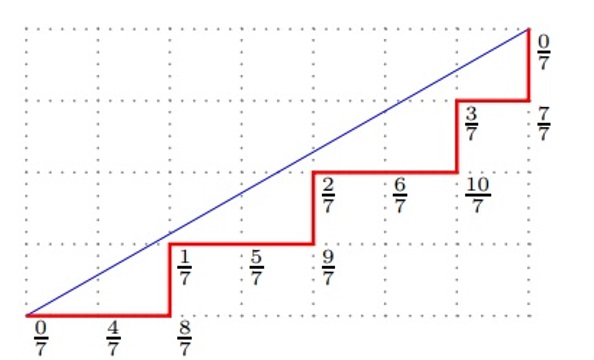}
\caption{The labels of the points on the Christoffel path of slope $\frac{4}{7}$}
\label{fig2}
\end{figure}
\begin{definition}\cite{Combonwords}
The standard factorization of the Christoffel word w of slope $\frac{a}{b}$ is the factorization $w = (w_1, w_2)$ where $w_1$ encodes the portion of the Christoffel path from (0, 0) to the closest point $C$ on the path having label $\frac{1}{b}$ and $w_2$ encodes the portion from $C$ to $(b, a)$. 
\end{definition}
\begin{definition}\cite{Episturmianwords}
For a finite alphabet $A$ and $a_1, a_2 \in A$, consider the following endomorphisms of $A^\ast$:
\begin{enumerate}[label=(\roman*)]
\item $\psi_{a_1}(a_1) = \overline{\psi}_{a_1}(a_1) = a_1$

\item $\psi_{a_1}(a_k) = a_1 a_k$ if $a_k \in A \setminus \{a_1\}$

\item $\overline{\psi}_{a_1}(a_k) = a_k a_1$ if $a_k \in A \setminus \{a_1\}$

\item $\theta_{a_1a_2}(a_1) = a_2$, $\theta_{a_1a_2}(a_2) = a_1$, $\theta_{a_1a_2}(a_k) = a_k$, $a_k \in A \setminus \{a_1, a_2\}$
\end{enumerate} 

\end{definition}
\begin{definition}\cite{Onagen}
    The set of episturmian morphisms is the monoid generated by the morphisms $\psi_{a_1}$,$\overline{\psi}_{a_1}$,\\$\theta_{a_1a_2}$ under composition. The set of pure episturmian morphisms is the submonoid generated by $\psi_{a_1}$ and $\overline{\psi}_{a_1}$.
\end{definition}
\begin{definition}\cite{Onagen}
A word $w \in A^\ast$ belongs to an epichristoffel class if it is the image of a letter by an episturmian morphism. 
\end{definition}

\begin{definition}\cite{Onagen}
A word $w \in A^\ast$ is epichristoffel if it is the unique Lyndon word occurring in an epichristoffel class.\ A word is called $c$-epichristoffel if it is conjugate to an epichristoffel word.\
\end{definition}
\begin{definition}\cite{Onagen}
    If $w$ is an epichristoffel word over the alphabet $A=\{a_{0},a_{1},...,a_{k-1}\}$ then the $k$-tuple $p = (p_0, p_1, \ldots, p_{k-1})$ is called an epichristoffel $k$-tuple if $p_{i}=|w|_{a_{i}}$ for $0\leq i\leq k-1$. 
\end{definition}
It has been shown in \cite{Combonwords} that for a given pair of positive integers $a, b$ there exists a Christoffel word having a number of $x'$s and $b$ number of $y'$s if and only if $a$ and $b$ are relatively prime. For a $k$-letter alphabet, $k \geq 3$, Paquin presented an algorithm to determine the existence of an epichristoffel $k$-tuple based on the following definition and results.
\begin{definition}\cite{Onagen}
Let $p = (p_0, p_1, \ldots, p_{k-1})$ be a $k$-tuple of non-negative integers. The operator $T : N^k \to Z^k$ is defined over the $k$-tuple $p$ as \newline
$T(p) = T(p_0, p_1, \ldots, p_{k-1}) = (p_0, p_1, \ldots, p_{i-1}, (p_i - \sum\limits_{j = 0, j \neq i}^{k-1} p_j), p_{i+1}, \ldots, p_{k-1})$, where $p_i \geq p_j$, $\forall$ $j \neq i$.
\end{definition}
\begin{proposition}\cite{Onagen}\label{prop3.1}
Let $p$ be a $k$-tuple. There exists an epichristoffel word with occurrence number of letters $p$ if and only if iterating $T$ over $p$ yields a $k$-tuple $p^\prime$ with $p_j^\prime = 0$ for $j \neq m$ and $p_m^\prime = 1$, for a unique $m$ such that $0 \le m \le k-1$.
\end{proposition}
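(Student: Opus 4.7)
The plan is to exploit the fact that the operator $T$ is essentially the inverse, at the level of letter-occurrence vectors, of one application of a generator $\psi_{a_i}$ or $\overline{\psi}_{a_i}$. Specifically, if $v$ has occurrence vector $q = (q_0, \ldots, q_{k-1})$, then both $\psi_{a_i}(v)$ and $\overline{\psi}_{a_i}(v)$ produce words with vector $q'$ where $q'_j = q_j$ for $j \neq i$ and $q'_i = q_i + \sum_{j \neq i} q_j$. In particular $q'_i$ is a (weak) maximum entry and $q'_i - \sum_{j \neq i} q'_j = q_i \geq 0$, so $T$ applied at position $i$ returns $q$. The morphism $\theta_{a_r a_s}$ simply swaps entries $r$ and $s$ of the occurrence vector.

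For the forward direction, I would assume $w$ is epichristoffel with vector $p$, so $w = \varphi(a_m)$ for some episturmian morphism $\varphi$. Writing $\varphi$ as a composition of generators and repeatedly using commutation relations of the form $\psi_{a_i} \circ \theta_{a_r a_s} = \theta_{a_r a_s} \circ \psi_{a_{\sigma(i)}}$ (with $\sigma$ the transposition encoded by $\theta_{a_r a_s}$), all $\theta$'s can be pushed to the outermost layer. The resulting normal form $\theta \circ \tilde\mu_n \circ \cdots \circ \tilde\mu_1(a_m)$ has vector $\sigma_\theta(p)$, where $\sigma_\theta$ is the outer permutation and each $\tilde\mu_\ell \in \{\psi_{a_{i_\ell}}, \overline{\psi}_{a_{i_\ell}}\}$. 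Peeling off $\tilde\mu_n$ corresponds to applying $T$ at position $i_n$ of $\sigma_\theta(p)$, yielding the vector of $\tilde\mu_{n-1} \circ \cdots \circ \tilde\mu_1(a_m)$; iterating, the process terminates at the unit vector $e_m$.

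For the converse, suppose iterating $T$ on $p$ reduces it to the unit vector $e_m$ through the sequence of positions $i_1, i_2, \ldots, i_n$ at which $T$ is applied. Then $w = \psi_{a_{i_n}} \circ \cdots \circ \psi_{a_{i_1}}(a_m)$ has occurrence vector exactly $p$ by the inverse calculation above. Since $w$ is the image of a letter under an episturmian morphism, it belongs to an epichristoffel class; standard arguments show such an image is primitive, so its conjugacy class contains a unique Lyndon representative, which is the desired epichristoffel word with vector $p$. Uniqueness of $m$ follows from the fact that the terminal vector, being a unit vector, pinpoints $m$ as the index of its nonzero entry.

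The main obstacle I expect is the bookkeeping in the forward direction: one must justify the normal form that pushes every $\theta$ outside the composition and handle ties in the maximum entries of the intermediate vectors, which leave the choice of $i$ in the definition of $T$ ambiguous. I would treat this inductively, arguing that whenever an epichristoffel word of vector $p$ exists there is at least one valid choice of maximum index at each step of the $T$-iteration such that the process terminates at a unit vector, and conversely that any successful iteration reconstructs an explicit episturmian morphism realizing $p$. A secondary matter is citing or reproving primitivity of $\varphi(a_m)$, which is standard in the episturmian literature.
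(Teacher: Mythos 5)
The paper does not actually prove this proposition---it is quoted from Paquin's work \cite{Onagen}---so there is no in-paper argument to compare against; your proposal is essentially the standard proof, resting on the observation that $T$ inverts the action of $\psi_{a_i}$ and $\overline{\psi}_{a_i}$ on occurrence vectors, which is exactly the mechanism the paper itself invokes via Lemma 3.2 (a $\psi_{a_0}$ or $\overline{\psi}_{a_0}$ can be peeled off precisely when $a_0$ is the most frequent letter). The two loose ends you flag are the only real ones and both close easily: a tie at the maximum either leaves only two nonzero coordinates (where the choice of index is irrelevant) or forces a nonpositive entry no matter which maximal index is chosen, so the nondeterminism in $T$ never affects whether the iteration reaches a unit vector; and primitivity of $\varphi(a_m)$ follows because each generator preserves the gcd of the occurrence vector, which starts at $1$ on a letter.
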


\begin{example}
Consider the 3-tuple (2, 3, 7). Then, $T(2, 3, 7) = (2, 3, 2)$, $T^2(2, 3, 7) = T(2, 3, 2) = (2, -1, 2)$. Hence there exists no epichristoffel word corresponding to the 3-tuple (2, 3, 7).

On the other hand, there exists an epichristoffel word corresponding to the 3-tuple (1, 4, 2) since $T(1, 4, 2) = (1, 1, 2)$, $T^2(1, 4, 2) = T(1, 1, 2) = (1, 1, 0)$, $T^3(1, 4, 2) = T(1, 1, 0) = (1, 0, 0)$.
\end{example}
\begin{lemma}\cite{Onagen}
Let $w \in A^\ast$ be a $c$-epichristoffel word. Then, there exists a $c$-epichristoffel word $u \in A^\ast$, $|u| > 1$ and an episturmian morphism $\phi \in \{\psi_{a_0}, \overline{\psi}_{a_0}\}$, with $a_0 \in A$, such that $w = \phi(u)$ if and only if $|w|_{a_0} > |w|_{a_i}$ for all $a_i \in A$, $i \neq 0$.
\end{lemma}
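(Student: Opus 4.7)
My plan is to prove the two directions separately: the forward implication follows from a direct letter-counting argument, while the reverse implication requires a structural analysis using the operator $T$ from Proposition \ref{prop3.1}.

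For the forward direction, I assume $w = \phi(u)$ for some $\phi \in \{\psi_{a_0}, \overline{\psi}_{a_0}\}$ and c-epichristoffel $u$ with $|u| > 1$. Both morphisms fix $a_0$ and send any other letter $a_i$ to either $a_0 a_i$ or $a_i a_0$, so letter counting gives $|w|_{a_0} = |u|_{a_0} + \sum_{i \neq 0} |u|_{a_i} = |u|$ and $|w|_{a_i} = |u|_{a_i}$ for $i \neq 0$. Since a c-epichristoffel word is conjugate to a Lyndon word it is primitive, so $|u| > 1$ prevents $u$ from being a power of a single letter; this yields $|u|_{a_i} < |u|$ for every $i$, and hence $|w|_{a_i} < |w|_{a_0}$ whenever $i \neq 0$.

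For the reverse direction, suppose $|w|_{a_0} > |w|_{a_i}$ for all $i \neq 0$, and set $p = (|w|_{a_0}, \ldots, |w|_{a_{k-1}})$. Because $w$ is c-epichristoffel, Proposition \ref{prop3.1} guarantees that iterating $T$ on $p$ reaches a unit tuple. Since $a_0$ is the strict maximum, the first application of $T$ acts on the $a_0$-coordinate and produces $p' = T(p)$, whose subsequent iterations continue to reach the unit tuple; hence $p'$ is itself an epichristoffel tuple and is realized by some c-epichristoffel word $u$. A direct computation shows that both $\psi_{a_0}(u)$ and $\overline{\psi}_{a_0}(u)$ have the same length and letter counts as $w$, so the remaining task is to select the correct pre-image $u$ and direction $\phi$. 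I would do this by examining the episturmian directive sequence that realizes $w$ as the image of a single letter: the outermost generator in this sequence must be $\psi_{a_0}$ or $\overline{\psi}_{a_0}$, since these are the only generators that strictly increase the count of a single letter above all others in the resulting word, while any $\theta_{a,a'}$ at the outermost position can be absorbed by relabelling.

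The main obstacle is verifying that the $u$ extracted above truly satisfies $w = \phi(u)$, not merely that it shares letter counts with a pre-image. I expect to handle this by induction on $|w|$: the base case treats the shortest c-epichristoffel words exhibiting a strictly dominant letter, which can be checked directly, while the inductive step leverages the fact that every c-epichristoffel word of length greater than one arises from a shorter c-epichristoffel word under a single generator of the episturmian monoid. The uniqueness of the Lyndon representative in the conjugacy class is the key ingredient that makes the parsing of $w$ into $a_0$-blocks unambiguous and allows me to exclude a $\theta_{a,a'}$ as the final step.
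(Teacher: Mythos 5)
The paper does not actually prove this lemma: it is quoted verbatim from \cite{Onagen}, and the authors only refer the reader to ``the iteration found in the proof of Lemma 3.2.'' So your attempt can only be judged against the standard argument from that source. Your forward direction is correct and essentially the canonical one: both $\psi_{a_0}$ and $\overline{\psi}_{a_0}$ contribute exactly one $a_0$ per letter of $u$, so $|w|_{a_0}=|u|$ and $|w|_{a_i}=|u|_{a_i}$, and primitivity of $u$ together with $|u|>1$ gives strictness.

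The reverse direction, however, has a genuine gap. Your plan is to ``examine the episturmian directive sequence that realizes $w$ as the image of a single letter,'' but by Definition 3.4--3.5 a $c$-epichristoffel word is only \emph{conjugate} to an image of a letter; $w$ itself need not be such an image. Your argument therefore produces some conjugate $w'$ of $w$ of the form $\psi_{a_0}(u')$, not $w$ itself, and the lemma asserts an exact equality $w=\phi(u)$. The missing ingredient is the structural fact that the conjugacy class of $\psi_{a_0}(u')$ is exactly $\{\psi_{a_0}(v): v\equiv u'\}\cup\{\overline{\psi}_{a_0}(v): v\equiv u'\}$: writing $\psi_{a_0}(u')$ as a concatenation of blocks $a_0$ and $a_0a_i$, rotating by whole blocks yields $\psi_{a_0}$ of a conjugate of $u'$, while rotating the leading $a_0$ past a block boundary switches to $\overline{\psi}_{a_0}$. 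This is what forces every conjugate with $a_0$ strictly dominant to begin or end with $a_0$ and to contain no circular factor $a_ia_j$ with $i,j\neq 0$, and it is precisely what makes the de-substitution of $w$ (not merely of some conjugate) well defined. Your appeal to Proposition \ref{prop3.1} to produce a tuple $p'=T(p)$ realized by ``some'' $u$ does not help here, since matching letter counts does not identify the pre-image; you acknowledge this but the induction on $|w|$ you propose never supplies the conjugation argument that closes the gap.
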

We refer the readers to the iteration found in the proof of Lemma 3.2, as the algorithm used to determine the epichristoffel word is based on this iteration. Given below is an example of constructing the epichristoffel word based on the mentioned iteration.
\begin{example}
If $A = \{x, y, z\}$, then, for the 3-tuple (1, 4, 2) describing the occurrence numbers of $x, y$ and $z$ respectively, the construction of the epichristoffel word is as follows:\\
$(1, 4, 2) \to^y (1, 1, 2) \to^z (1, 1, 0) \to^y (1, 0, 0)$\\
$\psi_{y}\psi_{z}\psi_{y}(x) = \psi_{y}\psi_{z}(yx)  = \psi_y(zyzx) = yzyyzyx$\\
Since this is obtained by a standard episturmian morphism to a letter, this standard episturmian word is a representative of the epichristoffel conjugacy class. The word $yzyyzyx$ is thus a $c$-epichristoffel word and the smallest word with respect to the lexicographic order in its conjugacy class that is, $xyzyyzy$ is the epichristoffel word, assuming $x < y < z$. 
\end{example}
\begin{lemma}\cite{Somecombprop}
A Christoffel word can always be written as the product of two Christoffel words.
\end{lemma}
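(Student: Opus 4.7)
My plan is to construct the factorization explicitly via the standard factorization defined above. Fix a non-trivial Christoffel word $w$ of slope $a/b$, with $\gcd(a,b)=1$ and $a,b\ge 1$. First I would locate the splitting point $C=(i_1,j_1)$, defined as the lattice point on the path whose label equals $1/b$. Existence and uniqueness come from B\'ezout's theorem: the congruence $i_1 a \equiv 1 \pmod{b}$ has a unique solution with $0 < i_1 < b$, and setting $j_1 = (i_1 a - 1)/b$ produces a lattice point whose second coordinate equals $\lfloor i_1 a/b\rfloor$, so $(i_1,j_1)$ lies on the Christoffel path and has label $(i_1 a - j_1 b)/b = 1/b$. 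Writing $w = w_1 w_2$, where $w_1$ encodes the sub-path from $(0,0)$ to $C$ and $w_2$ encodes the sub-path from $C$ to $(b,a)$, the task reduces to verifying that $w_1$ and $w_2$ are themselves Christoffel words.

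The proposal is to identify $w_1$ as the Christoffel word of slope $j_1/i_1$ and $w_2$ (after translation sending $C$ to the origin) as the Christoffel word of slope $(a-j_1)/(b-i_1)$. Two things must be checked: the numerator and denominator of each slope are coprime, and each sub-path satisfies the two defining conditions of Definition~3.1. Coprimality of $(i_1,j_1)$ is immediate from $i_1 a - j_1 b = 1$, and the identity $b(a-j_1) - a(b-i_1) = i_1 a - j_1 b = 1$ gives coprimality of $(b-i_1, a-j_1)$.

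The bulk of the work is verifying the two geometric conditions. For the "lies below the chord" condition on $w_1$, the chord from $(0,0)$ to $(i_1,j_1)$ has slope $j_1/i_1 < a/b$, so it sits strictly below the line to $(b,a)$ on $0 < x < i_1$; a short computation, using that $ia \bmod b \ge 1$ for $0<i<b$ (since $\gcd(a,b)=1$), yields $i_1\lfloor ia/b\rfloor \le i j_1$, showing that each intermediate lattice point of the Christoffel path stays weakly below the flatter chord. For the "no enclosed lattice points" condition, I would argue by contradiction: any lattice point strictly between the chord $(0,0)$--$(i_1,j_1)$ and the sub-path $w_1$ would, because that chord sits below the chord to $(b,a)$ on this interval, also lie strictly between the original chord $(0,0)$--$(b,a)$ and the full Christoffel path, contradicting its defining property. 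A symmetric argument, using that the chord from $C$ to $(b,a)$ has slope $(a-j_1)/(b-i_1) > a/b$, handles $w_2$.

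The main obstacle is bookkeeping the slope inequalities carefully so that the "no enclosed lattice points" property of the original path transfers cleanly to each half; no deeper ingredient is needed beyond B\'ezout's identity and elementary lattice geometry, and in particular Lemma~3.1 is not required.
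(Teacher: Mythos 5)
The paper gives no proof of this lemma---it is stated purely as a citation---but your argument is correct and follows exactly the route the paper's own machinery anticipates: splitting at the unique label-$\frac{1}{b}$ point of Definition 3.3 (the standard factorization), with B\'ezout's identity giving existence, uniqueness, and coprimality of the two sub-slopes, and the slope comparison $j_1/i_1 < a/b < (a-j_1)/(b-i_1)$ transferring the ``below the chord'' and ``no enclosed lattice points'' conditions to each half. This is essentially the standard proof from the cited source, so there is nothing to flag.
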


\begin{lemma}\cite{Onagen}\label{lem3.4}
An epichristoffel word cannot always be written as the product of two epichristoffel words.
\end{lemma}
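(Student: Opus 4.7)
The statement is the negation of a universal claim, so my plan is to exhibit a single counterexample: an epichristoffel word $w$ such that no factorization $w = uv$ (with $u,v$ nonempty) has both factors epichristoffel. Single letters count as epichristoffel via the identity morphism (which belongs to the episturmian monoid), so I must be prepared to allow a trivial factor on one side and still show the other side fails.

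Very short candidates rule themselves out quickly. Over an alphabet of size $k\ge 3$, the only length-$3$ word using all letters has tuple $(1,1,1)$, and Proposition~\ref{prop3.1} already fails: whichever coordinate one selects as the maximum, $T(1,1,1)$ produces a negative entry, so no epichristoffel word has occurrence tuple $(1,1,1)$. I therefore push to length $4$ and apply the algorithm illustrated in the example preceding Lemma~\ref{lem3.4} to the tuple $(1,2,1)$ over $\{x,y,z\}$ with $x<y<z$. The iteration $(1,2,1)\xrightarrow{y}(1,0,1)\xrightarrow{z}(1,0,0)$ tells me to compute $\psi_y\psi_z(x) = \psi_y(zx) = yzyx$, and the lexicographically smallest conjugate of $yzyx$ is $w=xyzy$. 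Thus $xyzy$ is epichristoffel, and this is my candidate.

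The verification then consists in checking the three nontrivial splittings of $w$. For $x \mid yzy$, the conjugates of $yzy$ include $yyz<yzy$, so $yzy$ is not Lyndon and hence not epichristoffel. For $xy \mid zy$, the suffix $zy$ has conjugate $yz<zy$ and fails the Lyndon test. For $xyz \mid y$, the prefix $xyz$ has tuple $(1,1,1)$, already excluded by Proposition~\ref{prop3.1}. In every case at least one factor is not epichristoffel, proving that $xyzy$ admits no factorization of the required form.

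The core difficulty is not the verification, which is mechanical once the candidate is fixed, but the search for a minimal counterexample: nearby candidates such as the epichristoffel word associated to the tuple $(2,1,1)$ (which yields $xyxz = xy\cdot xz$, a genuine epichristoffel factorization) already succeed, so one must be careful not to grab the first small epichristoffel word in sight. Once the Lyndon obstruction in the suffix $yzy$ or $zy$ and the Proposition~\ref{prop3.1} obstruction for $(1,1,1)$ are combined on the same word, the counterexample crystallizes.
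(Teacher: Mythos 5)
Your proposal is correct and takes essentially the same approach as the paper, which establishes the lemma by exhibiting a counterexample (the paper's Example~3.3 uses the length-$7$ word $xzyzzyz$ with tuple $(1,2,4)$, asserted without detailed verification). Your witness $xyzy$ with tuple $(1,2,1)$ is smaller, and your explicit check of all three nontrivial splittings (Lyndon failure for $yzy$ and $zy$, Proposition~\ref{prop3.1} failure for $(1,1,1)$) is sound.
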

\begin{example}
    Consider the epichristoffel word $xzyzzyz$ over the alphabet $A = \{x, y, z\}$. There are no factorizations of this word into epichristoffel words.
\end{example}

\begin{lemma}\cite{Onagen}\label{lem3.5}
Any $c$-epichristoffel word $w$ of length $n > 1$, can be non-uniquely written as the product of two $c$-epichristoffel words.
\end{lemma}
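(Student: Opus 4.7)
The plan is to prove the statement by strong induction on $n = |w|$, exploiting the recursive structure of epichristoffel classes under episturmian morphisms. In the base case $n = 2$, I would write $w = w[0] \cdot w[1]$; each single letter $a$ is trivially $c$-epichristoffel (indeed $a = \psi_a(a)$ is epichristoffel), so the factorization is immediate.

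For the inductive step with $n > 2$, let $\ell$ denote the epichristoffel Lyndon word in the conjugacy class of $w$, so that $\ell = \phi(a)$ for some episturmian morphism $\phi$ and some letter $a \in A$. Using the commutation relation $\theta_{a_i a_j} \circ \psi_c = \psi_{\theta_{a_i a_j}(c)} \circ \theta_{a_i a_j}$ (and the analogous one for $\overline{\psi}$), the transpositions $\theta$ can be pushed to the inside of the composition, so I may assume the outermost generator of $\phi$ is $\chi \in \{\psi_b, \overline{\psi}_b\}$ for some $b \in A$. Hence $\ell = \chi(u)$ for an epichristoffel word $u$ built from the remaining (shorter) morphism. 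Because $u$ is primitive, a length count gives $2 \le |u| < n$ whenever $n > 2$, so the inductive hypothesis is available for $u$ and for each of its cyclic rotations.

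The key structural step is to show that every rotation of $\ell$ has the form $\psi_b(v)$ or $\overline{\psi}_b(v)$ for some rotation $v$ of $u$. Writing $\ell = \psi_b(u_0)\psi_b(u_1)\cdots \psi_b(u_{k-1})$ with each block $\psi_b(u_i)$ equal to $b$ (if $u_i = b$) or $b u_i$ (if $u_i \ne b$), a rotation anchored at the start of block $i$ coincides with $\psi_b(\sigma^i(u))$, while a rotation anchored just after the leading $b$ of a length-two block coincides with $\overline{\psi}_b(\sigma^i(u))$, where $\sigma$ denotes the cyclic shift. The latter hinges on the word-level identity $\psi_b(v) \cdot b = b \cdot \overline{\psi}_b(v)$, which holds because $\psi_b(v)$ always begins with $b$. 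Consequently $w = \chi'(v)$ for some rotation $v$ of $u$ and some $\chi' \in \{\psi_b, \overline{\psi}_b\}$; the inductive hypothesis applied to $v$ (a $c$-epichristoffel word of length $|u| < n$) yields a factorization $v = v' v''$ into two $c$-epichristoffel words, whence $w = \chi'(v') \cdot \chi'(v'')$.

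The remaining sub-claim, that $\chi'(v')$ is $c$-epichristoffel whenever $v'$ is, I would verify as follows: if $v'$ is conjugate to an epichristoffel word $e = \pi(c)$, then $\chi'(v')$ is itself a cyclic rotation of $\chi'(e) = (\chi' \circ \pi)(c)$, which is the image of a letter under an episturmian morphism and therefore lies in an epichristoffel class. The main obstacle of the proof is precisely the rotation-lifting claim of the previous paragraph: showing that the block decomposition of $\ell$ sends every cyclic shift of $\ell$ to a $\psi_b$- or $\overline{\psi}_b$-image of a shift of $u$ requires careful bookkeeping of block positions together with the identity $b \cdot \overline{\psi}_b(v) = \psi_b(v) \cdot b$. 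Non-uniqueness of the factorization is obtained from the freedom at each induction level: distinct factorizations of $v$, or different valid decompositions $\phi = \chi \circ \phi'$, produce distinct factorizations of $w$.
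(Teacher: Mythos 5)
Your proof is correct, but it follows a genuinely different route from the one the paper relies on. The paper itself quotes Lemma 3.5 from \cite{Onagen} without proof; the method it displays (Example 3.4, and again inside the proof of Lemma 5.1) is a one-shot factorization with no induction: write a representative of the epichristoffel class as $\psi_{a_{i_1}}\cdots\psi_{a_{i_l}}(a_j)$, absorb the innermost generator to get $\psi_{a_{i_1}}\cdots\psi_{a_{i_{l-1}}}(a_{i_l}a_j)$, and split the image of the two-letter word as $w=\phi(a_{i_l})\cdot\phi(a_j)$, both factors being images of letters and hence $c$-epichristoffel. You instead run a strong induction on $|w|$, peeling the \emph{outermost} pure generator and descending through conjugates, which forces you to prove the rotation-lifting claim: every conjugate of $\psi_b(u)$ equals $\psi_b(v)$ or $\overline{\psi}_b(v)$ for some conjugate $v$ of $u$. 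That claim is true, and your block bookkeeping for it is sound: rotations anchored at block starts are $\psi_b$-images of shifts of $u$, rotations anchored after the leading $b$ of a two-letter block are $\overline{\psi}_b$-images via the identity $\psi_b(v)\,b=b\,\overline{\psi}_b(v)$, and these two cases exhaust all positions. What your detour buys is rigor on a point the paper glosses over: the proof of Lemma 5.1 asserts that any $c$-epichristoffel word is an image of a letter under standard morphisms $\psi$ alone, which is false for arbitrary conjugates (for instance $xyy$ is not a $\psi$-image of any word, though $xyy=\overline{\psi}_y\overline{\psi}_y(x)$); your use of both spins handles every conjugate honestly, in effect re-deriving from scratch the peeling fact the paper imports as Lemma 3.2. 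What the paper's route buys is brevity, since it factors only the canonical representative produced by the iteration and lets conjugacy be absorbed into the definition of $c$-epichristoffel. Two minor blemishes in your write-up: you call $u$ ``epichristoffel'' when all you have (and all you need) is that it is $c$-epichristoffel, since it need not be the Lyndon representative; and the non-uniqueness clause is only gestured at --- indeed for $n=2$ a word has exactly one splitting into two nonempty factors, so literal non-uniqueness fails there --- but this looseness is inherited from the cited statement itself, and neither the paper nor its factorization method argues non-uniqueness beyond exhibiting the freedom in the choice of splitting point.
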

\begin{example}
    A factorization of the $c$-epichristoffel word $yzyyzyx$ is $yzy,yzyx$ obtained by considering the iteration in Example 3.2 in the following way:
    $\psi_{y}\psi_{z}\psi_{y}(x) = \psi_{y}\psi_{z}(yx)  = \psi_{y}\psi_{z}(y) ,\psi_{y}\psi_{z}(x)\newline=\psi_{y}(zy),\psi_{y}(zx)=yzy,yzyx$. \newline
    This factorization is used in the later part of the paper to construct an epichristoffel tree. 
\end{example}

\section{The Christoffel and Stern-Brocot trees}
\begin{definition}\cite{Combonwords}
The Christoffel tree is an infinite binary tree where each node is of the form $(u, v)$ which represents a Christoffel word occurring in its standard factorization form and the left, right descendants of which are $(u, uv)$ and $(uv, v)$ respectively. The root of this tree is the Christoffel word of slope $\frac{1}{1}$, i.e. $(x, y)$.
\end{definition}
\begin{lemma}\cite{Combonwords}
Every Christoffel word appears exactly once on the Christoffel tree. 
\end{lemma}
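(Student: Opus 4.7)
The plan is to establish a bijection between nodes of the Christoffel tree and nodes of the Stern-Brocot tree of positive rationals in lowest terms, then transfer the well-known ``each rational appears exactly once'' property of the Stern-Brocot tree to the Christoffel tree.

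First I would proceed by induction on the depth of the tree to show the structural invariant: at every node $(u,v)$, the concatenation $uv$ is a genuine Christoffel word and $(u,v)$ is its standard factorization. The base case is the root $(x,y)$, which is the Christoffel word of slope $1/1$. For the inductive step I would take a node $(u,v)$ for which $uv$ is Christoffel of slope $\frac{a_1+a_2}{b_1+b_2}$, where $|u|_x=b_1,\ |u|_y=a_1,\ |v|_x=b_2,\ |v|_y=a_2$. I would then verify that the left child $(u,uv)$ produces $u(uv)$, whose Christoffel path is obtained geometrically by replacing the line segment to $(b_1+b_2,a_1+a_2)$ with one to $(2b_1+b_2,2a_1+a_2)$, and check that the point of label $1/(2b_1+b_2)$ on this new path is exactly the endpoint of the prefix $u$. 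A symmetric argument handles the right child $(uv,v)$. This is the step I expect to be the main obstacle, since it requires either a careful computation with the label function $(ia-jb)/b$ of Definition 3.3, or an equivalent algebraic argument via palindromic closure / the characterization of Christoffel words as balanced Lyndon words (Lemma 3.1).

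Next, I would associate to every node $(u,v)$ the reduced fraction $\mathrm{slope}(uv)=\frac{|uv|_y}{|uv|_x}$. Under the operations $(u,v)\mapsto(u,uv)$ and $(u,v)\mapsto(uv,v)$, the slope is replaced by the mediant of $\mathrm{slope}(u)$ and $\mathrm{slope}(v)$, taken on the appropriate side. This is precisely the construction rule of the Stern-Brocot tree started from the mediant neighborhood $(0/1,\;1/0)$ with mediant $1/1$. Hence the slope map is a tree isomorphism from the Christoffel tree onto the Stern-Brocot tree, and one must only invoke the classical fact that every positive rational in lowest terms appears exactly once in the Stern-Brocot tree.

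Finally, to conclude, I would note that every Christoffel word is determined by its slope $a/b$ with $\gcd(a,b)=1$, and the inductive invariant of the first paragraph guarantees that the node of the Christoffel tree whose slope is $a/b$ carries exactly that Christoffel word. Uniqueness of the occurrence follows immediately from uniqueness in the Stern-Brocot tree; existence follows because every admissible slope is realized there. Together these two observations yield the lemma.
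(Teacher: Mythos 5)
The paper offers no proof of this lemma: it is quoted directly from the reference \cite{Combonwords}, so there is no in-paper argument to compare against line by line. Your route --- prove inductively that every node $(u,v)$ of the tree is the standard factorization of a Christoffel word, transport the tree onto the Stern-Brocot tree via the slope map $(u,v)\mapsto |uv|_y/|uv|_x$, and then import the classical ``every reduced positive rational appears exactly once'' property --- is exactly the mechanism the paper itself records immediately afterwards as Theorem 4.1, and it is the standard textbook proof. Your mediant bookkeeping is correct: the left child $(u,uv)$ has slope equal to the mediant of $\mathrm{slope}(u)$ and $\mathrm{slope}(uv)$, and symmetrically on the right, which is precisely the Stern-Brocot insertion rule. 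The one place where your write-up is a sketch rather than a proof is the inductive step you yourself flag: showing that $u(uv)$ and $(uv)v$ are again Christoffel words \emph{and} that the displayed splitting point is the point of minimal positive label on the new path. That verification (a short computation with the label function of Definition 3.3, or equivalently the closure of standard factorizations under the two child operations) is the actual content of the lemma; everything else is transfer along a bijection. As written the proposal is a correct and complete plan with that single computation left to be carried out, and it is fully consistent with the approach the paper relies on.
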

\begin{definition}\cite{Combonwords}
The mediant of two fractions $\frac{a}{b}$ and $\frac{c}{d}$, denoted by $\frac{a}{b} \oplus \frac{c}{d}$ is $\frac{a+c}{b+d}$. This operation gives rise to the Stern-Brocot sequence as follows:\\
Let $S_0$ denote the sequence $\frac{0}{1}$, $\frac{1}{0}$ (we view $\frac{1}{0}$ as a formal fraction for the purpose of the construction of the successive terms of the sequence). For $i > 0$, $S_i$ is constructed from $S_{i-1}$ by inserting between consecutive elements of the sequence their mediant. The first few iterations of this process yields the following sequence:\\
$\frac{0}{1}, \frac{\mathbf{1}}{\mathbf{1}}, \frac{1}{0}$ \hspace*{3mm} $(S_1)$\\
$\frac{0}{1}, \frac{\mathbf{1}}{\mathbf{2}}, \frac{1}{1}, \frac{\mathbf{2}}{\mathbf{1}}, \frac{1}{0}$ \hspace*{3mm} $(S_2)$\\
$\frac{0}{1}, \frac{\mathbf{1}}{\mathbf{3}}, \frac{1}{2}, \frac{\mathbf{2}}{\mathbf{3}}, \frac{1}{1}, \frac{\mathbf{3}}{\mathbf{2}}, \frac{2}{1}, \frac{\mathbf{3}}{\mathbf{1}}, \frac{1}{0}$ \hspace*{3mm} $(S_3)$\\
$\frac{0}{1}, \frac{\mathbf{1}}{\mathbf{4}}, \frac{1}{3}, \frac{\mathbf{2}}{\mathbf{5}}, \frac{1}{2}, \frac{\mathbf{3}}{\mathbf{5}}, \frac{2}{3}, \frac{\mathbf{3}}{\mathbf{4}}, \frac{1}{1}, \frac{\mathbf{4}}{\mathbf{3}}, \frac{3}{2}, \frac{\mathbf{5}}{\mathbf{3}}, \frac{2}{1}, \frac{\mathbf{5}}{\mathbf{2}}, \frac{3}{1}, \frac{\mathbf{4}}{\mathbf{1}}, \frac{1}{0}$ \hspace*{3mm} $(S_4)$ \newline
We have indicated the mediants obtained in each iteration in bold.
\end{definition}
\begin{definition}\cite{Combonwords}
The Stern-Brocot tree is an infinite binary tree in which the vertices of the $i^{th}$ $(i > 0)$ level are the mediants obtained in the $i^{th}$ iteration of the Stern-Brocot sequence. 
\end{definition}
\begin{figure}[H]
\centering
\includegraphics[scale=0.5]{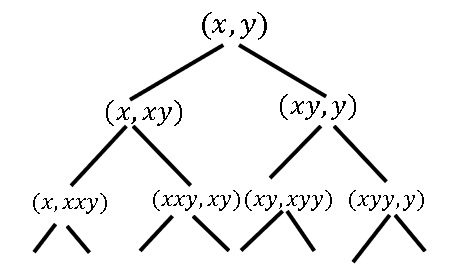}
\includegraphics[scale=0.5]{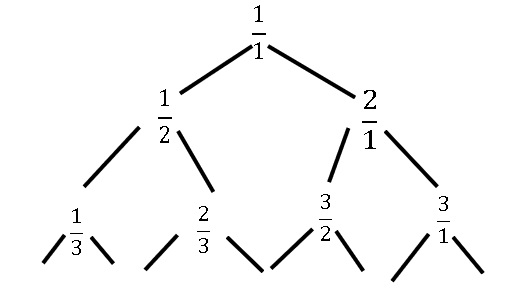}
\caption{The Christoffel and Stern-Brocot trees}
\label{fig3}
\end{figure}
\begin{theorem}\cite{Combonwords}
The Christoffel tree is isomorphic to the Stern-Brocot tree via the map that associates to the vertex $(u, v)$ of the Christoffel tree, the fraction $\frac{|uv|_y}{|uv|_x}$. The inverse map associates to a fraction $\frac{a}{b}$ the pair $(u, v)$ where $(u, v)$ is the standard factorization of the Christoffel word of slope $\frac{a}{b}$.
\end{theorem}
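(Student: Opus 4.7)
I plan to prove this by induction on the depth of the tree, showing that at each level the map $\Phi:(u,v)\mapsto \frac{|uv|_y}{|uv|_x}$ takes the Christoffel tree vertices bijectively onto the Stern-Brocot tree vertices. The computational core of the argument is simply that $|uv|_y=|u|_y+|v|_y$ and $|uv|_x=|u|_x+|v|_x$: setting $\alpha=\frac{|u|_y}{|u|_x}$ and $\beta=\frac{|v|_y}{|v|_x}$, one immediately gets $\Phi(u,v)=\alpha\oplus\beta$, and applying the tree recursion yields $\Phi(u,uv)=\alpha\oplus(\alpha\oplus\beta)$ and $\Phi(uv,v)=(\alpha\oplus\beta)\oplus\beta$, which are exactly the two new mediants that flank $\alpha\oplus\beta$ after one further step of the Stern-Brocot construction.

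To make the induction close, I would strengthen the hypothesis: at every vertex $(u,v)$ of the Christoffel tree, the fractions $\alpha,\beta$ should be the immediate left and right neighbors of $\Phi(u,v)$ in the Stern-Brocot sequence just before $\Phi(u,v)$ is inserted. The base case is the root $(x,y)$, where $\alpha=\frac{0}{1}$ and $\beta=\frac{1}{0}$ are the formal flanking terms of $S_0$ and $\Phi(x,y)=\frac{1}{1}$, matching the root of the Stern-Brocot tree. For the inductive step, inserting $\mu=\alpha\oplus\beta$ between $\alpha$ and $\beta$ creates the two new adjacent pairs $(\alpha,\mu)$ and $(\mu,\beta)$; by the mediant computation above these pair up exactly with the two Christoffel children $(u,uv)$ and $(uv,v)$, and the invariant propagates. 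Bijectivity then follows: a count shows both trees have $2^{n-1}$ vertices at level $n$, distinct pairs $(\alpha,\beta)$ produce distinct mediants in lowest terms, and the inverse map sends $\frac{a}{b}$ to the standard factorization of the Christoffel word of slope $\frac{a}{b}$.

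The main obstacle will be justifying the strengthened induction hypothesis. One has to confirm that whenever $(u,v)$ is a standard factorization of a Christoffel word, $u$ and $v$ are themselves Christoffel words whose slopes are precisely the Stern-Brocot parents of the slope of $uv$. While this is combinatorially encoded in the Christoffel tree recursion, it requires checking that the recursion actually realizes Definition 3.5 geometrically — namely, that cutting the Christoffel path of slope $\frac{a}{b}$ at the lattice point of label $\frac{1}{b}$ produces two subpaths whose slopes are exactly the mediant parents of $\frac{a}{b}$, and that those subpaths are themselves lower Christoffel paths in translated coordinate systems. Once this geometric identification is in hand, the mediant algebra sketched above closes the inductive argument and yields the stated isomorphism together with its explicit inverse.
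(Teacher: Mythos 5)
The paper does not prove this statement at all: Theorem 4.1 is quoted verbatim from \cite{Combonwords} as background, so there is no in-paper argument to compare yours against. Judged on its own terms, your plan follows the standard textbook route, and the mediant algebra at its core is correct: since $|uv|_y=|u|_y+|v|_y$ and $|uv|_x=|u|_x+|v|_x$, the tree recursion $(u,v)\mapsto(u,uv),(uv,v)$ does translate into the Stern-Brocot insertion of mediants, and strengthening the induction hypothesis to carry the flanking neighbors $\alpha,\beta$ is exactly the right device to make the correspondence of children close up level by level.

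The genuine gap is the one you yourself flag and then defer. Everything you actually establish concerns only the \emph{slopes}: you show that the fraction attached to each Christoffel-tree vertex lands at the right node of the Stern-Brocot tree. But the theorem's second sentence asserts that the inverse map sends $\frac{a}{b}$ to the \emph{standard factorization} (Definition 3.3, the cut at the lattice point of label $\frac{1}{b}$) of the Christoffel word of slope $\frac{a}{b}$. For that you must prove that the pair $(u,uv)$ (resp.\ $(uv,v)$) produced by the recursion is in fact the standard factorization of the word $uuv$ (resp.\ $uvv$) in the geometric sense --- equivalently, that cutting the path of slope $\frac{a}{b}$ at the point of label $\frac{1}{b}$ yields two lower Christoffel subpaths whose slopes are precisely the Stern-Brocot parents of $\frac{a}{b}$. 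This is the entire nontrivial content of the theorem (in \cite{Combonwords} it rests on identifying the label-$\frac{1}{b}$ point as the unique lattice point nearest the segment, or alternatively on the uniqueness of the factorization of a Christoffel word into two Christoffel words), and your proposal names it as an ``obstacle'' without supplying an argument. The bijectivity step is also thinner than it looks: the claim that distinct adjacent pairs yield distinct mediants in lowest terms relies on the invariant $|ad-bc|=1$ for neighbors in the Stern-Brocot construction, which should be stated and carried through your strengthened induction rather than asserted. Until the standard-factorization identification is actually proved, what you have is an isomorphism of abstract labelled trees, not the theorem as stated.
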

\begin{definition}\cite{Locterms}
For a natural number $k$, the $k^{th}$ left diagonal of the Stern-Brocot tree $L_k$ is the sequence made up of each $k^{th}$ term from each level beginning at the first level and the $k^{th}$ right diagonal of the Stern-Brocot tree $R_k$ is the sequence made up of each $k^{th}$ term taken from the end of each level beginning at the first level.

The first few left and right diagonals are mentioned below:\\
$L_1 = \left\{ \frac{1}{1}, \frac{1}{2}, \frac{1}{3}, \ldots \right\}$, $L_2 = \left\{ \frac{2}{1}, \frac{2}{3}, \frac{2}{5}, \ldots \right\}$, $L_3 = \left\{ \frac{3}{2}, \frac{3}{5}, \frac{3}{8}, \ldots \right\} \dots$\\
$R_1 = \left\{ \frac{1}{1}, \frac{2}{1}, \frac{3}{1}, \ldots \right\}$, $R_2 = \left\{ \frac{1}{2}, \frac{3}{2}, \frac{5}{2}, \ldots \right\}$, $R_3 = \left\{ \frac{2}{3}, \frac{5}{3}, \frac{8}{3}, \ldots \right) \dots$
\end{definition}
\noi{\bf Note:} If a particular level of the Stern-Brocot tree does not have a $k^{th}$ term, we omit that level and begin from the first level which has the $k^{th}$ term. For example, the left diagonal $L_{4}$ is obtained by taking the $4^{th}$ term from each row begining at the first level, but since the first and second levels of the tree do not have a $4^{th}$ term, we begin with the $4^{th}$ term of the $3^{rd}$ level.
\newline
\noi{\bf Notation:} If $L_k = \{L_{k,1}, L_{k,2}, L_{k,3}, \ldots\}$ and $L_m = \{L_{m,1}, L_{m,2}, L_{m,3}, \ldots\}$ then the sum of $L_k$ and $L_m$ is denoted by $L_k \oplus L_m$ and represents the left diagonal $\{L_{k,1} \oplus L_{m,1}, L_{k,2} \oplus L_{m,2}, \ldots\}$.
\begin{lemma}\cite{Locterms}\label{lem4.2}
If $k = 2^i(2j+1)$ where $i, j \in N$, then 
\begin{enumerate}[label=\textit{\roman*})]
\item \(
L_{2k} = \left\{
\begin{array}{ll}
L_k \oplus L_{1^+}, & \text{if } j = 0 \\
L_k \oplus L_{j+1}, & \text{if } j > 0
\end{array}
\right.
\quad \text{where } L_{1^+} = \left\{ \frac{1}{0}, \frac{1}{1}, \frac{1}{2}, \dots \right\}
\)

\item $L_{2k+1} = L_{k+1} \oplus L_{j+1}$.
\end{enumerate}
\end{lemma}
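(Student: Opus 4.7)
The plan is to work with the Stern-Brocot sequence $T_n$ at stage $n$, indexing its $2^n+1$ positions by $0,1,\ldots,2^n$ and writing $t_n(p)$ for the entry at position $p$. The construction of $T_n$ from $T_{n-1}$ translates into the two rules
\[
t_n(2m)=t_{n-1}(m), \qquad t_n(2m+1)=t_{n-1}(m)\oplus t_{n-1}(m+1).
\]
Since level $n$ of the tree consists precisely of the new mediants $t_n(1),t_n(3),\ldots,t_n(2^n-1)$, the $k$-th element of level $n$ is $t_n(2k-1)$, and hence, writing $m_k=\lceil\log_2(2k)\rceil$ for the smallest level carrying a $k$-th entry, $L_{k,t}=t_{m_k+t-1}(2k-1)$.

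For part (i), I would apply the odd-index rule at position $p=4k-1$ and level $n=m_{2k}+t-1$, giving
\[
L_{2k,t}=t_{n-1}(2k-1)\oplus t_{n-1}(2k).
\]
Since $m_{2k}=m_k+1$, the first summand is $t_{m_k+t-1}(2k-1)=L_{k,t}$. For the second, write $k=2^i(2j+1)$, so $2k=2^{i+1}(2j+1)$; applying the even-index rule $i+1$ times yields $t_{m_k+t-1}(2k)=t_{m_k+t-i-2}(2j+1)$. If $j\ge 1$, this is the $(j+1)$-th element of level $m_k+t-i-2$, and using the identity $\lceil\log_2(2j+1)\rceil=1+\lceil\log_2(j+1)\rceil$ one verifies that $m_k-m_{j+1}=i+1$, so the element equals $L_{j+1,t}$. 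If $j=0$, then $k=2^i$ and the expression reduces to $t_{t-1}(1)$; a direct computation gives $t_0(1)=\tfrac{1}{0}$ and $t_n(1)=\tfrac{1}{n}$ for $n\ge 1$, which matches $L_{1^+,t}$ term by term.

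Part (ii) is handled by the same mechanism applied to position $4k+1$: with $n=m_{2k+1}+t-1$,
\[
L_{2k+1,t}=t_{n-1}(2k)\oplus t_{n-1}(2k+1).
\]
The relation $m_{2k+1}=m_{k+1}+1$ identifies $t_{n-1}(2k+1)$ as $L_{k+1,t}$. Iterating the even-index rule as before reduces $t_{n-1}(2k)$ to $t_{m_{2k+1}+t-i-3}(2j+1)$, and the identity $m_{2k+1}-m_{j+1}=i+2$, which I expect to hold uniformly for every $j\ge 0$, identifies this with $L_{j+1,t}$. Since the mediant is symmetric, the conclusion $L_{2k+1}=L_{k+1}\oplus L_{j+1}$ follows.

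The hard part is the careful bookkeeping of the level indices $m_k,m_{k+1},m_{2k},m_{2k+1},m_{j+1}$; once the relations $m_{2k}=m_k+1$, $m_{2k+1}=m_{k+1}+1$, $m_k-m_{j+1}=i+1$ (for $j\ge 1$), and $m_{2k+1}-m_{j+1}=i+2$ (for all $j\ge 0$) are established, the mediant rule delivers the claim. The $j=0$ subcase of (i) requires a separate verification because the auxiliary diagonal $L_{1^+}$ is not itself a left diagonal of the tree but a left-shifted copy of $L_1$ padded at its front with the formal fraction $\tfrac{1}{0}=t_0(1)$.
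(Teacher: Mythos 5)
The paper does not prove this lemma at all: it is imported verbatim from the cited reference \cite{Locterms}, so there is no in-paper argument to compare yours against. Judged on its own, your proof is essentially correct and gives a clean, self-contained derivation. The positional recurrence $t_n(2m)=t_{n-1}(m)$, $t_n(2m+1)=t_{n-1}(m)\oplus t_{n-1}(m+1)$ is the right formalization of the insertion process, the identification $L_{k,t}=t_{m_k+t-1}(2k-1)$ with $m_k=\lceil\log_2(2k)\rceil$ matches the paper's convention of skipping levels without a $k$-th entry, and I have checked your conclusions against the explicit rows $S_1$--$S_4$ (e.g.\ $L_4=L_2\oplus L_{1^+}$, $L_5=L_3\oplus L_1$, $L_6=L_3\oplus L_2$); they all come out right, including the $j=0$ subcase of (i) where $t_{t-1}(1)$ reproduces $\tfrac{1}{0},\tfrac{1}{1},\tfrac{1}{2},\dots=L_{1^+}$.

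The only places where you assert rather than derive are the level-bookkeeping identities, and you flag one of them yourself (``which I expect to hold uniformly''). They do hold, and all of them reduce to a single elementary observation: for odd $m>1$ one has $\lceil\log_2 m\rceil=\lceil\log_2(m+1)\rceil$, because the powers of $2$ exceeding $1$ are even. From this, $m_k=i+1+m_{j+1}$ for $j\ge1$ (giving $m_k-m_{j+1}=i+1$), $m_{2k+1}=1+\lceil\log_2(2k+1)\rceil=1+m_{k+1}$, and $m_{2k+1}-m_{j+1}=i+2$ for all $j\ge0$ follow by writing $k=2^i(2j+1)$ and noting that the least power of $2$ strictly exceeding $2^i(2j+1)$ is $2^{i+a+1}$ where $2^a$ is the least power of $2$ at least $j+1$. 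Writing out that one computation would close the remaining gap and make the argument fully rigorous; nothing else in the proposal needs repair.
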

\begin{remark}
The goal of Lemma 4.2 is to show that any left diagonal in the Stern-Brocot tree is obtained by the mediant sum operation "$\oplus$" of some previous two left diagonals. For example, the left diagonal $L_6$ is obtained by $L_3\oplus{}L_2$. This is the motivation for the Stern-Brocot trees that we construct corresponding to our epichristoffel trees, which is shown in Theorem 5.2,as it follows the same rules of construction. The only difference is instead of adding two fractions with the mediant operation we add two tuples with the mediant operation. Therefore, the left diagonals of the Stern-Brocot trees that correspond to our epichristoffel trees also share the same property that any left diagonal can be obtained by the mediant sum of some previous two left diagonals.    
\end{remark}
\begin{theorem}\cite{Locterms}
If $\frac{a}{b}$ is the $k^{th}$ entry of the $n^{th}$ row of the Stern-Brocot tree where $n > 1$, then the $k^{th}$ entry of the $(n+1)th$ row is $\frac{a}{a+b}$.  
\end{theorem}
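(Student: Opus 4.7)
The plan is to exploit the self-similarity of the Stern-Brocot tree: the left subtree of the root $\frac{1}{1}$ is isomorphic to the entire tree via the map $\phi$ defined by $\phi\!\left(\frac{a}{b}\right) = \frac{a}{a+b}$. Under this isomorphism, row $n$ of the full tree is sent to row $n$ of the left subtree, and the latter constitutes exactly the first $2^{n-1}$ entries of row $n+1$ of the full tree. Matching up positions then forces the $k^{th}$ entry of row $n+1$ to equal $\phi$ applied to the $k^{th}$ entry of row $n$, which is exactly the asserted rule $\frac{a}{b} \mapsto \frac{a}{a+b}$.

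First I would verify by a one-line calculation that $\phi$ is compatible with the mediant operation, namely $\phi\!\left(\frac{a}{b} \oplus \frac{c}{d}\right) = \frac{a+c}{a+b+c+d} = \phi\!\left(\frac{a}{b}\right) \oplus \phi\!\left(\frac{c}{d}\right)$. Next I would observe that $\phi$ sends the sentinel endpoints $\frac{0}{1}$ and $\frac{1}{0}$ of the Stern-Brocot construction to $\frac{0}{1}$ and $\frac{1}{1}$ respectively, which are precisely the endpoints of the interval on which the left subtree of the root is generated by the same mediant-insertion procedure.

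These two facts together let me induct on the iteration index: applying $\phi$ entry-wise to $S_{n}$ produces exactly the analogous Stern-Brocot sequence built on the interval $\left[\frac{0}{1}, \frac{1}{1}\right]$. The new mediants introduced at step $n$ of that sequence constitute row $n$ of the left subtree, which in the full tree is the first half of row $n+1$. Reading off the $k^{th}$ newly inserted mediant in each construction and applying the inductive hypothesis then yields the claimed identity.

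The main obstacle is making the self-similarity argument watertight: one must check that $\phi$ truly intertwines the two constructions level by level so that the induction goes through, and that the left-to-right order on row $n$ of the left subtree agrees with the restriction of the left-to-right order on row $n+1$ of the full tree (which is essentially automatic, since every left-subtree descendant precedes every right-subtree descendant in the in-order traversal). A minor book-keeping point is that the theorem only concerns the $k^{th}$ entry for $k \leq 2^{n-1}$; for such $k$ the corresponding position in row $n+1$ always lies in the left half of that row, which is exactly the range covered by $\phi$.
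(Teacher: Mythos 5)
The paper does not prove this statement; it is quoted from \cite{Locterms} without proof, so there is no in-paper argument to compare against. Judged on its own, your self-similarity proof is correct and complete: the map $\phi\left(\frac{a}{b}\right) = \frac{a}{a+b}$ does commute with mediants, does carry the sentinel pair $\left(\frac{0}{1}, \frac{1}{0}\right)$ to $\left(\frac{0}{1}, \frac{1}{1}\right)$, and therefore carries the entire mediant-insertion process onto the process generating the left subtree of $\frac{1}{1}$; since each row of the tree is listed in increasing order and every left-subtree node is less than $1$ while every right-subtree node exceeds $1$, the image of row $n$ is exactly the initial segment of length $2^{n-1}$ of row $n+1$, in the same order, which gives the claim (and, as you note, the range $k \leq 2^{n-1}$ is automatic since that is the length of row $n$). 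The only presentational point worth tightening is to state explicitly that the order-preservation follows from the rows being sorted --- which in turn follows from the fact that inserting mediants keeps each $S_i$ sorted --- rather than leaving it as ``essentially automatic.'' One could also observe that the companion statement (the paper's next theorem, giving $\frac{a+b}{b}$) follows from the identical argument applied to the right subtree via $\frac{a}{b} \mapsto \frac{a+b}{b}$, or from yours by the symmetry $\frac{a}{b} \mapsto \frac{b}{a}$.
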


\begin{theorem}\cite{Locterms}
If $\frac{a}{b}$ is the $k^{th}$ entry of the $n^{th}$ row of the Stern-Brocot tree where $n > 1$, then the $k^{th}$ entry of the $(n+1)th$ row is $\frac{a+b}{b}$.  
\end{theorem}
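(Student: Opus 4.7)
The plan is to derive this theorem from the preceding Theorem~4.2 by exploiting a reflective symmetry of the Stern--Brocot tree. Since Theorem~4.2 concerns the $k$-th entry of a row counted from the left (matching the definition of the left diagonals $L_k$), I read the present statement, by analogy with the definition of the right diagonals $R_k$, as concerning the $k$-th entry counted from the right. Under that convention the conclusion follows cleanly by an inversion argument.

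The central step is a palindromic lemma: reversing the Stern--Brocot sequence $S_n$ and replacing each fraction $\frac{a}{b}$ by its \emph{inverse} $\frac{b}{a}$ recovers $S_n$ itself. I would prove this by induction on $n$. The base case $S_1 = \frac{0}{1}, \frac{1}{1}, \frac{1}{0}$ is immediate, since the endpoints $\frac{0}{1}$ and $\frac{1}{0}$ are mutual inversions and $\frac{1}{1}$ is fixed by inversion. For the inductive step, the key algebraic observation is that inversion commutes with the mediant operation: the inversion of $\frac{a}{b} \oplus \frac{c}{d} = \frac{a+c}{b+d}$ is $\frac{b+d}{a+c} = \frac{b}{a} \oplus \frac{d}{c}$. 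Consequently, when $S_{n+1}$ is constructed from $S_n$ by inserting mediants between consecutive terms, the reverse-and-invert operation intertwines with the insertion, so $S_{n+1}$ inherits the symmetry. Restricting to the entries newly inserted at step $n+1$, namely the $(n+1)$-th row of the tree, this says that the $k$-th entry from the right of that row is the inversion of the $k$-th entry from the left.

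With this symmetry in hand, the theorem follows in three lines. Suppose $\frac{a}{b}$ is the $k$-th entry from the right of row $n$. By the palindromic lemma, $\frac{b}{a}$ is then the $k$-th entry from the left of row $n$. Theorem~4.2 gives the $k$-th entry from the left of row $n+1$ as $\frac{b}{a+b}$, whose inversion is $\frac{a+b}{b}$, and applying the palindromic lemma once more to row $n+1$ identifies this with the $k$-th entry from the right of row $n+1$, as required.

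The main obstacle I anticipate is formalizing the palindromic lemma cleanly, in particular tracking indices so that ``the $k$-th from the right'' genuinely corresponds to the inversion of ``the $k$-th from the left'' at every level, including on the rows themselves rather than only on the full sequences $S_n$. Once this bookkeeping is pinned down, the remainder of the argument is a short syllogism resting on the explicit inversion of the mediant together with the preceding theorem.
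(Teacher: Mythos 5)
The paper does not actually prove this statement---it is imported from the cited reference alongside Theorem 4.2---so there is no in-paper argument to compare against; judged on its own, your proposal is correct. Your interpretive choice (reading ``$k$-th entry'' here as counted from the right) is the only consistent one, since otherwise the statement would contradict Theorem 4.2, and it matches the paper's convention for the right diagonals $R_k$ (e.g.\ $R_1 = \{\frac{1}{1},\frac{2}{1},\frac{3}{1},\dots\}$, where indeed $\frac{a}{b}\mapsto\frac{a+b}{b}$). The palindromic lemma is the right tool: reverse-and-invert fixes $S_0$, and since the inverse of $\frac{a+c}{b+d}$ is the mediant of $\frac{b}{a}$ and $\frac{d}{c}$, the symmetry propagates through each insertion step. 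The bookkeeping you flag as the main obstacle does close up cleanly: the entries newly inserted at step $n$ occupy exactly the even positions of $S_n$ (which has $2^n+1$ terms), and reversal sends position $2i$ to position $2^n+2-2i$, so it permutes the new entries among themselves in reversed order; hence the $k$-th new entry from the right is precisely the inverse of the $k$-th from the left. The final syllogism ($\frac{a}{b}$ $k$-th from the right of row $n$ $\Rightarrow$ $\frac{b}{a}$ $k$-th from the left $\Rightarrow$ $\frac{b}{a+b}$ $k$-th from the left of row $n+1$ $\Rightarrow$ $\frac{a+b}{b}$ $k$-th from the right) is then immediate. The one dependency worth noting is that you take Theorem 4.2 as given; that is legitimate here since it precedes this statement, but it too is only cited, not proved, in the paper. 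A self-contained alternative would prove both diagonal recurrences directly from the continued-fraction or matrix description of the tree, but your symmetry reduction is shorter and avoids re-deriving the left-diagonal case.
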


From the Stern-Brocot tree we can thus deduce that if there exists a Christoffel word of slope $\frac{a}{b}$ then there will always exist a Christoffel word of slope $\frac{a}{b+na}$ and a Christoffel word of slope $\frac{a+nb}{b}$ for any natural number $n$. More results on the Stern-Brocot tree can be found in \cite{LinkingCalkin} and \cite{Locterms}.
\section{Epichristoffel Trees}
In this section, we introduce infinite binary trees for epichristoffel words similar to the Christoffel tree.  

Consider any $c$-epichristoffel word $w$ over the ordered alphabet $A=\{a_{0},a_{1},...,a_{k-1}\}$ of length $n > 1$. Since $w$ can be nonuniquely written as the product of two $c$-epichristoffel words, as shown in Example 3.4. Consider such a factorization of $(u, v)$ of $w$ where $u$ and $v$ are $c$-epichristoffel.

\begin{lemma}\label{lem5.1}
If $w = uv$ is a $c$-epichristoffel word whose factorization is obtained as explained above, then the words $w_1 = uuv$ and $w_2 = uvv$ are also $c$-epichristoffel words.
\end{lemma}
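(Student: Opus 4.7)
The plan is to produce, for each of $w_1 = uuv$ and $w_2 = uvv$, an explicit episturmian morphism together with a single letter such that the morphism sends that letter to the given word. By Definitions 3.5--3.7, this would place each word in an epichristoffel class and hence show it is $c$-epichristoffel.

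The first step is to unpack the description of the factorization from the discussion preceding the lemma. Following Example 3.4, I expect the factorization $w = uv$ to come by peeling off the innermost generator in a representation $w = \psi_{b_1}\psi_{b_2}\cdots\psi_{b_m}(c)$: setting $\phi = \psi_{b_1}\cdots\psi_{b_{m-1}}$ and using the identity $\psi_{b_m}(c) = b_m c$ (valid because $b_m \neq c$, as the nontriviality of the factorization forces), one reads off $u = \phi(b_m)$ and $v = \phi(c)$.

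The case $w_1 = uuv$ should then drop out from a short calculation. Writing $uuv = \phi(b_m b_m c)$ and iterating $\psi_{b_m}$ on $c$,
\[
\psi_{b_m}^2(c) \;=\; \psi_{b_m}(b_m c) \;=\; b_m \cdot b_m c \;=\; b_m^2 c,
\]
so $w_1 = (\phi \circ \psi_{b_m}^2)(c)$ is the image of a single letter under an episturmian morphism, which is all we need.

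The step I expect to require a moment of thought is $w_2 = uvv$, since the symmetric try $\psi_c(b_m) = c\,b_m$ yields the wrong word. My plan is to invoke the dual generator $\overline{\psi}_c$ from Definition 3.4, which satisfies $\overline{\psi}_c(b_m) = b_m c$; a one-line computation then gives
\[
\overline{\psi}_c^2(b_m) \;=\; \overline{\psi}_c(b_m c) \;=\; b_m c \cdot c \;=\; b_m c^2,
\]
whence $w_2 = \phi(b_m c^2) = (\phi \circ \overline{\psi}_c^2)(b_m)$ is again the image of a single letter under an episturmian morphism. The conceptual takeaway I would flag is the left/right asymmetry: adjoining an extra copy of $u$ on the left stays inside the $\psi$ generators, while adjoining an extra copy of $v$ on the right forces invocation of an $\overline{\psi}$ generator. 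Once this asymmetry is noticed, the verification is essentially immediate.
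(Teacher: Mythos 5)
Your proposal is correct, and the $w_1 = uuv$ half coincides exactly with the paper's argument: both of you write $w = \phi(\psi_{b_m}(c))$ with $u = \phi(b_m)$, $v = \phi(c)$, and obtain $uuv = \phi(\psi_{b_m}^2(c))$ by doubling the innermost standard morphism. Where you diverge is the $w_2 = uvv$ half. The paper never leaves the pure \emph{standard} generators: it forms $\phi\,\psi_{c}\,\psi_{b_m}(c)$, computes this to be $\phi(c)\phi(b_m)\phi(c) = vuv$, and then concludes via $vuv \equiv uvv$ that $uvv$ is $c$-epichristoffel, using that the $c$-epichristoffel property is conjugacy-invariant. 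You instead reach $uvv$ on the nose by switching to the anti-standard generator, $\phi\,\overline{\psi}_c^{\,2}(b_m) = \phi(b_m c^2) = uvv$. Both routes are legitimate under Definitions 3.4--3.7, since the episturmian monoid contains the $\overline{\psi}$ generators and since $c$-epichristoffel is a property of the conjugacy class. Your version buys a slightly cleaner statement (the word itself, not a conjugate, is exhibited as the image of a letter) at the cost of leaving the submonoid of standard morphisms; the paper's version keeps the entire construction inside the $\psi$'s, which matters for the rest of the paper because the epichristoffel tree is built from representatives obtained by standard morphisms, and the identity $vuv \equiv uvv$ is exactly what lets the authors place $uvv$ as the right child $(uv, v)$ in the tree. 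The asymmetry you flag is real and is visible in the paper too, just resolved by conjugation rather than by $\overline{\psi}$.
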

\begin{proof}
Since $w$ is a $c$-epichristoffel word, it is obtained by the application of a sequence of episturmian morphisms to some letter, say $a_{j} \in A$, where $0\leq j \leq k-1$. \newline
Therefore, \newline $w=\psi_{a_{i_{1}}}\psi_{a_{i_{2}}}...\psi_{a_{i_{l}}}(a_{j})$, where $0 \leq i_{1},i_{2},...,i_{l} \leq k-1$ and $i_{l} \neq j$. \newline
$=\psi_{a_{i_{1}}}\psi_{a_{i_{2}}}...\psi_{a_{i_{l-1}}}(a_{i_{l}}a_{j})$ \newline
$=\psi_{a_{i_{1}}}\psi_{a_{i_{2}}}...\psi_{a_{i_{l-1}}}(a_{i_{l}})\psi_{a_{i_{1}}}\psi_{a_{i_{2}}}...\psi_{a_{i_{l-1}}}(a_{j})$ \newline
Thus, $w=uv$, where $u=\psi_{a_{i_{1}}}\psi_{a_{i_{2}}}...\psi_{a_{i_{l-1}}}(a_{i_{l}})$ and $v=\psi_{a_{i_{1}}}\psi_{a_{i_{2}}}...\psi_{a_{i_{l-1}}}(a_{j})$
\newline
Consider now the word $w_{1}$ formed by the application of the following episturmian morphisms to the letter$a_{j}$: \newline
$w_{1}=\psi_{a_{i_{1}}}\psi_{a_{i_{2}}}...\psi_{a_{i_{l-1}}}\psi_{a_{j}}\psi_{a_{i_{l}}}(a_{j})$ \newline
$=\psi_{a_{i_{1}}}\psi_{a_{i_{2}}}...\psi_{a_{i_{l-1}}}\psi_{a_{j}}(a_{i_{l}}a_{j})$  \newline
$=\psi_{a_{i_{1}}}\psi_{a_{i_{2}}}...\psi_{a_{i_{l-1}}}\psi_{a_{j}}(a_{i_{l}})\psi_{a_{i_{1}}}\psi_{a_{i_{2}}}...\psi_{a_{i_{l-1}}}\psi_{a_{j}}(a_{j})$ \newline
$=\psi_{a_{i_{1}}}\psi_{a_{i_{2}}}...\psi_{a_{i_{l-1}}}(a_{j}a_{i_{l}})\psi_{a_{i_{1}}}\psi_{a_{i_{2}}}...\psi_{a_{i_{l-1}}}(a_{j})$ \newline
$=\psi_{a_{i_{1}}}\psi_{a_{i_{2}}}...\psi_{a_{i_{l-1}}}(a_{j})\psi_{a_{i_{1}}}\psi_{a_{i_{2}}}...\psi_{a_{i_{l-1}}}(a_{i_{l}})\psi_{a_{i_{1}}}\psi_{a_{i_{2}}}...\psi_{a_{i_{l-1}}}(a_{j})$ \newline
$=vuv$ \newline
$\equiv uvv$
\newline Therefore, $w_{2}$ is also $c$-epichristoffel.
\newline 
Similarly, it can be seen that $w_{1}$ is obtained by computing 
$\psi_{a_{i_{1}}}\psi_{a_{i_{2}}}...\psi_{a_{i_{l}}}\psi_{a_{i_{l}}}(a_{j})$ and is also \\$c$-epichristoffel.

\end{proof}
\subsection{The construction of an epichristoffel tree}
We construct the epichristoffel tree as follows: Let $w$ be an arbitrary epichristoffel word of length $n$ corresponding to the $k$-tuple $(x_1, x_2, \dots, x_k)$. The word $w$ is obtained by applying a sequence of episturmian morphisms on a letter $a_{j} \in A$, after which the smallest word in its conjugacy class is considered. In the process of doing so we factorize $w$ into two $c$-epichristoffel words say, $u$ and $v$ as shown in the proof of Lemma \ref{lem5.1}. \newline
Let $(l_1, l_2, \ldots, l_k)$ and $(m_1, m_2, \ldots, m_k)$ denote the corresponding $k$-tuples for the words $u$ and $v$ respectively. 

The epichristoffel word $w = w[1] w[2] \ldots w[n]$ is the smallest word in the conjugacy class obtained after the above episturmian morphims are applied. Due to this construction either $w[1] w[2] \ldots w[|u|]$ or $w[1] w[2] \ldots w[|v|]$ will be the epichristoffel word corresponding to $u$ or $v$. We use the word $w$ as the root of the tree. Suppose, without loss of generality that if $w[1] w[2] \ldots w[|v|]$ is the epichristoffel word corresponding to $v$, then the root of the tree is $(u^\prime, v^\prime)$ where\\
$u^\prime = w[1] w[2] \ldots w[|v|]$ and $v^\prime = w[|v|+1] w[|v|+2] \ldots w[|v|+|u|]$\\
The left and right descendants of this tree, as in the case of the Christoffel tree are $(u^\prime, u^\prime v^\prime)$ and $(u^\prime v^\prime, v^\prime)$ respectively. These in turn are once again epichristoffel due to Lemma \ref{lem5.1} and the fact that $u^\prime v^\prime$ is the smallest with respect to the lexicographic ordering. Suppose, if $w[1] w[2] \ldots w[|u|]$ is the epichristoffel word corresponding to $u$, then the root of the tree is $(u^\prime, v^\prime)$ where $u^\prime = w[1] w[2] \ldots w[|u|]$ and $v^\prime = w[|u|+1] w[|u|+2] \ldots w[|u|+|v|]$ This construction is illustrated in the example below:
\begin{example}\label{ex5.1}
Let $A = \{x, y, z\}$. Consider the 3-tuple (1,2,4). To find $u$ and $v$, we consider the episturmian morphisms applied to construct the word $zyzzyzx$.\\
$(1,2,4) \to^z (1,2,1) \to^y (1,0,1) \to^z (1,0,0)$\\
$\psi_{z}\psi_{y}\psi_{z}(x) = \psi_{z}\psi_{y}(zx) = \psi_{z}\psi_{y}(z) \psi_{z}\psi_{y}(x) = \psi_z(yz) \psi_z(yx) = zyz \ zyzx$\\
Thus, $u = zyz$ and $v = zyzx$, whose corresponding tuples are (0,1,2) and (1,1,2). The epichristoffel word corresponding to $zyzzyzx$ is $w = xzyzzyz$ and $w[1] w[2] w[3] w[4] = xzyz$ which is the epichristoffel word corresponding to the tuple (1,1,2).

The epichristoffel tree is obtained by considering $w$ as the root with the factorization $(xzyz,zyz)$ and the left and right descendants of each node follows the rule mentioned above.
\begin{figure}[h!]
\centering
\includegraphics[scale=0.5]{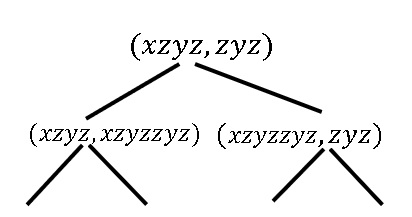}
\caption{The epichristoffel tree with the root word $(xzyz,zyz)$}
\label{fig5}
\end{figure}
\end{example}
Similar to the Stern-Brocot tree that is isomorphic to the Christoffel tree via the map that associates to each Christoffel word $w = (u,v)$ to the fraction $\frac{|w|_a}{|w|_b}$, we construct an infinite binary tree of $k$-tuples that can be associated to the epichristoffel tree via the map that associates each $c$-epichristoffel word $w = (u,v)$ to the $k$-tuple $(x_1, x_2, \ldots, x_k)$ where $x_i = |w|_{a_i}$.
\begin{definition}
The mediant of two $k$-tuples $(y_1, y_2, \ldots ,y_k)$ and $(z_1, z_2, \ldots ,z_k)$ is the be $k$-tuple\\
$(y_1+z_1, y_2+z_2, \ldots, y_k+z_k)$.\\
Let $(u_1^\prime, u_2^\prime, \ldots, u_k^\prime)$ and $(v_1^\prime, v_2^\prime, \ldots ,v_k^\prime)$ be two $k$-tuples corresponding to the words $u^\prime$ and $v^\prime$.\\
Define the sequence $S_i$ as follows:\\
$S_0 = (u_1^\prime, u_2^\prime, \ldots, u_k^\prime), (v_1^\prime, v_2^\prime, \ldots ,v_k^\prime)$ and\\
$S_i$ for $i > 1$ is obtained from $S_{i-1}$ by inserting between two consecutive elements of the sequence, their mediants. The mediants constructed in the $i$-$th$ iteration of the above process for $i > 0$ are the vertices in the $i$-$th$ level of this tree. We call this as the Stern-Brocot tree corresponding to the epichristoffel tree. 
\end{definition}
The Stern-Brocot sequence and tree corresponding to the epichristoffel tree of Example \ref{ex5.1} is shown below:
\begin{gather*}
S_0 = (1,1,2), (0,1,2)\\
S_1 = (1,1,2), (\mathbf{1},\mathbf{2},\mathbf{4}), (0,1,2)\\
S_2 = (1,1,,2), (\mathbf{2},\mathbf{3},\mathbf{6}), (1,2,4), (\mathbf{1},\mathbf{3},\mathbf{6}), (1,1,2)
\end{gather*}
and so on.
\vspace{-1em}
\begin{figure}[H]
\centering
\includegraphics[scale=0.5]{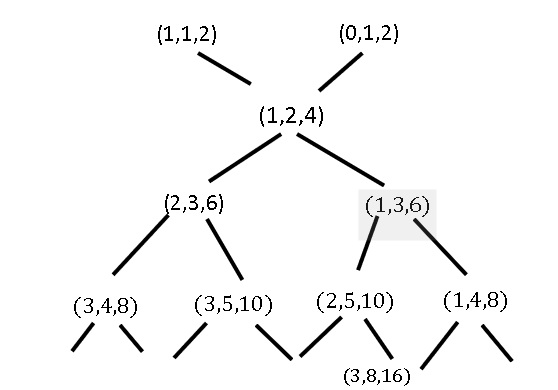}
\caption{The Stern-Brocot tree for Example \ref{ex5.1}}
\label{fig6}
\end{figure}
\begin{example}\label{ex5.2}
Let $A = \{x,y,z\}$. Consider the 3-tuple (3,2,1). The $c$-epichristoffel word corresponding to this tuple is $xy \ xyxz$. Here $u = xy$ and $v = xyxz$, whose corresponding tuples are (1,1,0) and (2,1,1) respectively. Hence, the epichristoffel word corresponding to this word is itself, that is $w = xyxyxz$. The respective epichristoffel and Stern-Brocot trees  are shown in Figures 7 and 8. 
\begin{figure}[h!]
\centering
\includegraphics[scale=0.5]{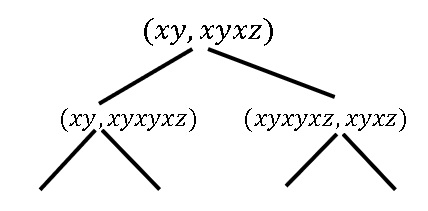}
\caption{The epichristoffel tree with the root word $(xy,xyxz)$}
\label{fig7}
\end{figure}

\begin{figure}[h!]
\centering
\includegraphics[scale=0.5]{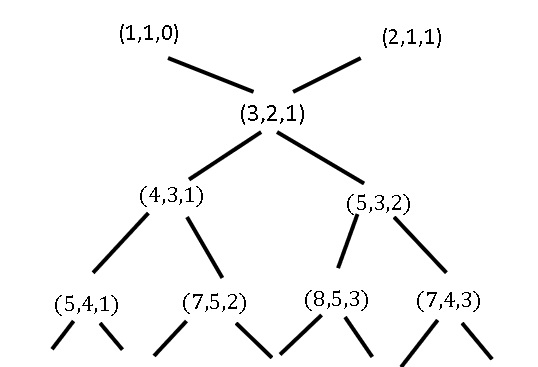}
\caption{The Stern-Brocot tree for Example \ref{ex5.2}}
\label{fig8}
\end{figure}
\end{example}
\begin{remark}
    To construct an epichristoffel tree we first begin with an epichristoffel word of length $n$, which serves as the root of the tree. To obtain an epichristoffel word we apply a set of episturmian morphisms to a letter. Once this is done, we may or may not have the epichristoffel word depending on whether the word is the smallest according to the lexicographic ordering in its conjugacy class. If the word obtained is not the smallest in its conjugacy class, we have obtained a c-epichristoffel word. This has been explained in Lemma 5.1. While applying the set of episturmian morphisms to the letter, we also factorize the word as shown in proof of Lemma 5.1 and write it as $(u,v)$, adhering to the notation of the construction of the Christoffel tree. If $w=w[1]w[2]\ldots w[n]$ is the smallest word in the conjugacy class of $(u,v)$, then either $w[1]w[2]\ldots w[|u|]$ or $w[1]w[2]\ldots w[|v|]$ is the epichristoffel word corresponding to the c-epichristoffel word u or the c-epichristoffel word $v$. This has been illustrated in Example 5.1 and Example 5.2. If without loss of generality, $w[1]w[2]\ldots w[|u|]$ is the epichristoffel word corresponding to the c-epichristoffel word $u$, then $w[u+1]w[u+2]\ldots w[|u+v|]$ is the epichristoffel word corresponding to the c-epichristoffel word $v$. We now begin the construction of our epichristoffel tree taking $(u',v')$ as the root of our tree where $u'=w[1]w[2]\ldots w[|u|]$  and $v'=w[u+1]w[u+2]\ldots w[|u+v|]$. The left and right descendants of this tree, as in the case of the Christoffel tree are $(u',u'v')$ and $(u'v',v')$ respectively. Since $(u',v')$ is an epichristoffel word, it is the smallest word with respect to the lexicographic order in its conjugacy class. Therefore, $u'<v'$ in the lexicographic ordering. From which we can conclude that the word $(u',u'v')$ is also smallest in its conjugacy class as we have only added the epichristoffel word u' that is smaller as a prefix. Similarly, we can deal with the case of  $(u'v',v')$ where we have added the epichristoffel word $v'$ which is larger as a suffix. Lemma 5.1 has already established that whenever $(u,v)$ is c-epichristoffel then $(uv,v)$ and $(u,uv)$ are also c-epichristoffel. Since epichristoffel words are words in the conjugacy class, they are c-epichristoffel. As they are the smallest in the conjugacy class, they are epichristoffel words. 
\end{remark}

\begin{theorem}
Every word appearing in the epichristoffel tree is obtained by the application of a finite number of episturmian morphisms to a letter and is lexicographically smallest in its respective conjugacy class making it an epichristoffel word. 
\end{theorem}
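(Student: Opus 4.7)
The plan is to prove the theorem by induction on the depth of a node in the epichristoffel tree. The root is handled by construction: the tree is explicitly defined to have $(u',v')$ at the root, where $u'v'$ is obtained from applying a sequence of episturmian morphisms to some letter (the subroutine used in the proof of Lemma \ref{lem5.1}) and is taken as the lexicographically smallest representative of its conjugacy class, hence is an epichristoffel word. Thus the base case is immediate.

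For the inductive step, I would assume that some node $(u',v')$ at depth $n$ is an epichristoffel word produced by a finite composition of episturmian morphisms, and show that both children $(u',u'v')$ and $(u'v',v')$ satisfy the same two properties. The \emph{episturmian morphism} condition follows directly from Lemma \ref{lem5.1}: the concatenations $u'u'v'$ and $u'v'v'$ are $c$-epichristoffel because the proof of that lemma exhibits explicit episturmian morphisms of the form $\psi_{a_{i_1}}\cdots\psi_{a_{i_{l-1}}}\psi_{a_j}\psi_{a_{i_l}}(a_j)$ and $\psi_{a_{i_1}}\cdots\psi_{a_{i_l}}\psi_{a_{i_l}}(a_j)$ that yield them (up to conjugacy). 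This takes care of the first half of the claim.

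The remaining task is the \emph{lexicographic} condition: that $u'u'v'$ and $u'v'v'$ are Lyndon words (smallest in their conjugacy classes). Since $(u',v')$ is epichristoffel, $u'v'$ is Lyndon, so $u'$ is a proper prefix of $u'v'$ and in particular $u' < u'v'$ in lexicographic order. The left descendant $u'u'v'$ is obtained by prefixing $u'v'$ with the strictly smaller Lyndon prefix $u'$, and a direct cyclic-rotation comparison (each rotation either begins with a suffix of $u'$ followed by $u'v'$ or with a suffix of $u'v'$, both of which are lexicographically larger than the rotation starting at position $0$ by the Lyndon property of $u'v'$ and the fact that $u'$ is its Lyndon prefix) shows $u'u'v'$ is Lyndon. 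The right descendant $u'v'v'$ is handled symmetrically, using that $v'$ is itself Lyndon, $u'v' < v'$ (which follows from $u' < v'$ together with the standard-factorization property used in the tree's construction), and comparing rotations: every nontrivial cyclic shift either starts inside $u'$, inside the first $v'$, or inside the second $v'$, and in each case the Lyndon property of $u'v'$ and $v'$ forces the shift to be lexicographically larger than $u'v'v'$.

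I expect the main obstacle to be making the rotation comparisons for $u'v'v'$ fully rigorous, since unlike the left-descendant case the smaller word is added as a prefix only through the factorization structure, not literally. The key fact I will need is the standard tree-of-Lyndon-words lemma: if $u$ and $v$ are Lyndon with $u < v$, then $uv$ is Lyndon with standard factorization $(u,v)$; applying this with $(u,v) = (u',u'v')$ and $(u,v) = (u'v',v')$ respectively gives both descendants in one shot, and then the induction closes. Combining this with the morphism construction from Lemma \ref{lem5.1} yields both conclusions simultaneously, and the remark preceding the theorem already indicates that each constructed descendant is Lyndon, so the formal induction mainly serves to record that every node is reached from the root in finitely many steps, ensuring that the associated episturmian morphism is a finite composition.
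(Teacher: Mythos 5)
Your proposal is correct and follows essentially the same route as the paper's proof: induction from the root, Lemma \ref{lem5.1} for the $c$-epichristoffel (morphism) half, and the inequality $u' < v'$ coming from the root being lexicographically minimal to get the Lyndon property of the descendants. The only difference is one of rigor — you make the Lyndon-concatenation step (if $u<v$ are Lyndon then $uv$ is Lyndon, applied to $(u',u'v')$ and $(u'v',v')$) explicit, whereas the paper simply asserts that the descendants are smallest in their conjugacy classes.
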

\begin{proof}
    The root of the epichristoffel tree is a word $w=w[1]w[2]...w[n]$ of length $n$ that is the smallest word in the conjugacy class of a word $w'$ obtained by an application of a sequence of episturmian morphisms as shown in the construction and is thus an epichristoffel word. \newline
    This root word $w$ is factorized as\\ $w[1]w[2]...w[|u|],w[|u|+1]w[|u|+2]...w[|u|+|v|]$ or $w[1]w[2]...w[|v|],w[|v|+1]w[v+2]...w[|v|+|u|]$ \\ where $|u|+|v|=n$, depending on whether $w[1]w[2]...w[|u|]$ is the epichristoffel word corresponding to the word $u$ or $w[1]w[2]...w[|v|]$ is the epichristoffel word corresponding to the word $v$ respectively, where $u,v$ is the factorisation of $w'$ as the product of two $c-$ epichristoffel words as shown in Example 3.3. \newline Without loss of generality, if $w[1]w[2]...w[|u|],w[|u|+1]w[|u|+2]...w[|u|+|v|]$ is the factorization of $w$ then the left and right descendants in the tree, respectively, are \\ $w[1]w[2]...w[|u|],w[1]w[2]...w[|u|]w[|u|+1]w[|u|+2]...w[|u|+|v|]$ and \\$w[1]w[2]...w[|u|]w[|u|+1]w[|u|+2]...w[|u|+|v|],w[|u|+1]w[|u|+2]...w[|u|+|v|]$. \newline
    Since $w[1]w[2]...w[n]$ is the smallest in its conjugacy class, we have \\ $w[1]w[2]...w[|u|]< w[|u|+1]w[|u|+2]...w[|u|+|v|]$. \newline Thus, the left and right descendants are also the smallest in their respective conjugacy classes and by Lemma 5.1 they are $c-$ epichristoffel due to which these descendants are also epichristoffel words. This pattern continues for all the words appearing in the epichristoffel tree and thus every word in this tree is an epichristoffel word.
    
\end{proof}
Thus, using an arbitrary epichristoffel word, the epichristoffel tree can be used to determine the existence of epichristoffel words of various lengths. Although a characterization of the epichristoffel word for each conjugacy class was not found in \cite{Onagen}, the epichristoffel tree can give us the characterization of an epichristoffel word corresponding to a tuple present in the tree. This is illustrated below:

\begin{example}\label{ex5.3}
 To determine the epichristoffel word in the conjugacy class of the $c$-epichristoffel word of the tuple, $(3,8,16)$ we need only to trace the path from the root of the epichristoffel tree to the node that represents the tuple (3,8,16). The root of this tree, as seen in Example \ref{ex5.1} is the word (xzyz,zyz). To find the word corresponding to the tuple (3,8,16), we must proceed to the right child given by the node, $(xzyzzyz,zyz)$ followed by its left child given by the node $(xzyzzyz,xzyzzyzzyz)$, which is followed by its right child given by the node $(xzyzzyzxzyzzyzzyz,xzyzzyzzyz)$. Thus, the epichristoffel for the conjugacy class of the epichristoffel tuple (3,8,16) is given by the word $xzyzzyzxzyzzyzzyzxzyzzyzzyz$.
\end{example}
\begin{remark}
    	We are using the epichristoffel tree to give a characterization of each epichristoffel word of its conjugacy class that appears on the tree as this was the question raised by Paquin in \cite{Onagen}. For example, the c-epichristoffel word corresponding to the tuple (3,8,16) is determined by applying the episturmian morphisms $\psi_{z} \psi_{y} \psi_{z} \psi_{x} \psi_{z} \psi_{z}$ to the letter x since, (3,8,16) $\to_{z}$ (3,8,5) $\to_{y}$ (3,0,5) $\to_{z}$ (3,0,2) $\to_{x}$ (1,0,2) $\to_{z}$ (1,01) $\to_{z}$ (1,0,0) . Doing so we obtain $\psi_{z}\psi_{y}\psi_{z}\psi_{x} \psi_{z} \psi_{z} (x)=\psi_{z}\psi_{y} \psi_{z} \psi_{x} \psi_{z} (zx)=\psi_{z} \psi_{y} \psi_{z} \psi_{x} (zzx)=\psi_{z} \psi_{y} \psi_{z} (xzxzx)=\psi_{z} \psi_{y} (zxzzxzzx)=\psi_{z} (yzyxyzyzyxyzyzyx)=zyzzyzxzyzzyzzyzxzyzzyzzyzx$.
The question raised by Paquin was to determine, given any c-epichristoffel word, such as the one above, a characterization of the epichristoffel word in its conjugacy class. We have shown that by locating the word on the epichristoffel tree, we receive the desired word in the conjugacy class. This has been explained in Example 5.3. Thus, by constructing an epichristoffel tree and its corresponding Stern-Brocot tree, we have successfully determined the epichristoffel word in the conjugacy class of each word on the tree. In Example 5.3, the tree gives us a characterization of the epichristoffel word in the conjugacy class of all the tuples occurring in it such as (2,3,6),(1,3,6),(3,4,8),(3,5,10),(2,5,10),(1,4,8),(3,8,16) etc.

\end{remark}

The following result is obtained by ordering the epichristoffel tree using the definition of left and right diagonals of Section 4.

\begin{theorem}
If $w$ is an epichristoffel word corresponding to the tuple $(x_1, x_2, \ldots x_k)$, then there exist epichristoffel words for tuples of the form $(x_1+d_1, x_2+d_2, \ldots x_k+d_k)$, for some positive integers $d_1,d_2,\dots,d_k$.
\end{theorem}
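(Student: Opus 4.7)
The plan is to use the epichristoffel tree construction of Section 5 with $w$ itself as the root, and then descend two levels along a specific path to exhibit an epichristoffel word whose tuple strictly dominates $(x_1, \ldots, x_k)$ coordinate-wise.

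Writing $w = u'v'$ as in the tree construction, let $(l_1, \ldots, l_k)$ and $(m_1, \ldots, m_k)$ be the tuples of $u'$ and $v'$ respectively, so that $l_i + m_i = x_i$ for every $i$. First I pass to the left descendant of the root, namely the node $(u', u'v')$, which represents the word $u'u'v'$ and is epichristoffel by Lemma \ref{lem5.1} together with the preceding theorem on descendants in the epichristoffel tree. From there I pass to its right descendant, namely $(u'u'v',\, u'v')$, which represents the word $u'u'v'\cdot u'v'$; a direct letter count shows that this word has tuple $(3l_1 + 2m_1, \ldots, 3l_k + 2m_k)$, so the increment over $(x_1, \ldots, x_k)$ in the $i$-th coordinate is $d_i = 2l_i + m_i = l_i + x_i$.

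The key observation is that $d_i = l_i + x_i \geq x_i \geq 1$, since $l_i \geq 0$ and, in the natural setting where $w$ uses every letter of the alphabet $A$, $x_i \geq 1$. Hence each $d_i$ is a positive integer, and the left--right descendant of the root provides an epichristoffel word with tuple $(x_1 + d_1, \ldots, x_k + d_k)$, as required.

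The main point requiring care is precisely the assumption that none of the $x_i$ is zero: if some letter $a_j$ does not actually occur in $w$, then the entire subtree rooted at $w$ avoids $a_j$, and the two-step descent above contributes nothing to the $j$-th coordinate. In that case one has to first apply an additional episturmian morphism $\psi_{a_j}$ to obtain an auxiliary epichristoffel extension of $w$ in which every letter is present, and then run the tree argument on that extended word. This preliminary morphism step is the only real obstacle, and it is straightforward to handle because the operator $T$ of Proposition \ref{prop3.1} is well-behaved under such extensions.
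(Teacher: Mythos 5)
Your argument is sound in its main line, but it takes a genuinely different route from the paper. The paper works entirely on the Stern--Brocot tree of $k$-tuples attached to the epichristoffel tree: it invokes the tuple analogue of Lemma \ref{lem4.2} to say that every left diagonal is a mediant sum of earlier left diagonals, checks that the property of being an arithmetic progression of tuples is preserved under the mediant sum, and concludes that the left diagonal through $(x_1,\ldots,x_k)$ has the form $(x_1,\ldots,x_k),(x_1+d_1,\ldots,x_k+d_k),(x_1+2d_1,\ldots,x_k+2d_k),\ldots$, giving an infinite arithmetic family of admissible tuples. You instead descend two explicit levels (left child, then its right child) from the root $(u',v')$ and compute the tuple $3l+2m=x+(l+x)$ directly, using only Lemma \ref{lem5.1} and the fact that every node of the tree is epichristoffel. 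Your route buys concreteness and, importantly, an actual positivity check: you get $d_i=l_i+x_i\geq x_i\geq 1$ whenever $w$ uses every letter. The paper's proof never verifies that the common differences along a left diagonal are positive, and in general they need not be --- the difference along $L_1$ is the tuple of $u'$, which can have a zero coordinate (in Example \ref{ex5.2} one has $u'=xy$ with tuple $(1,1,0)$), so on this point your argument is actually more robust. What the paper's route buys in exchange is the full arithmetic progression along each diagonal, which is what is later exploited in Example \ref{ex5.4} and the length-existence theorem.

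The one place where your write-up is only a sketch is the degenerate case $x_j=0$. Applying $\psi_{a_j}$ to (a conjugate of) $w$ does yield a $c$-epichristoffel word containing $a_j$, but its tuple is $(x_1,\ldots,x_{j-1},|w|,x_{j+1},\ldots,x_k)$: the coordinates $i\neq j$ do not move, so you still owe the verification that the subsequent two-level descent increases every coordinate, including any other coordinate that was zero; this forces one application of $\psi$ per missing letter before descending, and the phrase ``the operator $T$ is well-behaved under such extensions'' is not a substitute for that check. Since the paper's own proof silently assumes the same nondegeneracy, this does not put you behind the paper, but if you keep the case you should spell it out.
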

\begin{proof}
Since the mediant operation used for two fractions in the Stern-Brocot tree is extended for the case of $k$-tuples, Lemma \ref{lem4.2} can be applied analogously for the case of the epichristoffel tree.\\
Hence, each left diagonal of the Stern-Brocot tree corresponding to the epichristoffel tree is the sum of some previous left diagonals of the tree. \
Now, if $L$ and $L^\prime$ are any two left diagonals that satisfy the hypothesis then their sum also satisfies the hypothesis. Since \\
$L = \{(x_1,x_2,\ldots ,x_k), (x_1+d_1,x_2+d_2,\ldots x_k+d_k), (x_1+2d_1,x_2+2d_2,\ldots x_k+{2d}_k), \dots\}$ and \newline
$L^\prime = \{(y_1,y_2,\ldots,y_k), (y_1+e_1,y_2+e_2,\ldots y_k+e_k), (y_1+{2e}_1,y_2+2e_2,\ldots y_k+{2e}_k), \ldots\}$,\\then,\\
$L \oplus L^\prime = \{(x_1+y_1,x_2+y_2,\ldots,x_k+y_k), (x_1+y_1+d_1+e_1,x_2+y_2+d_2+e_2,\ldots,x_k+y_k+d_k+e_k), \ldots\}$\\
Thus, each left diagonal of the Stern-Brocot tree corresponding to an epichristoffel tree satisfies the result.\\
\end{proof}
\begin{remark}
    	The Stern-Brocot trees corresponding to the epichristoffel trees follows the same construction but for tuples rather than fractions. That is, $(a,b,c)\oplus(d,e,f)=(a+d,b+e,c+f)$. Therefore, the result can be extended analogously. 
\end{remark}
\begin{remark}
    Theorem 5.2, is an extension of Lemma 4.2 for the case of tuples, that is, every left diagonal is the mediant sum of some previous two left diagonals.  We are using this to show the existence of epichristoffel words of various lengths in the latter part of the paper. We use $L$ and $L'$ as an example to show that if they are any two left diagonals that satisfy the hypothesis then their mediant sum also satisfies the hypothesis. Since every left diagonal is the mediant sum of previous two left diagonals and if those two left diagonals satisfy the hypothesis then their sum must also satisfy the hypothesis. 
\end{remark}
The above result helps to answer the question raised by Paquin in \cite{Onagen} on the existence of epichristoffel words of various lengths. This is illustrated in the example below. 

\begin{example}\label{ex5.4}
The first three left diagonals of the epichristoffel tree in Example \ref{ex5.1} produce epichristoffel words of lengths $4n+3, 8n+2$ and $12n+5$ for $n = 1, 2, 3 \ldots$
\begin{gather*}
L_1 = \{(1,2,4), (2,3,6), \ldots ,(n,n+1,2n+2), \ldots\}\\
L_2 = \{(1,3,6), (3,5,10), \ldots ,(2n-1,2n+1,4n+2), \ldots\}\\
L_3 = \{(2,5,10), (5,8,16), \ldots, (3n-1,3n+2,6n+4), \ldots\}
\end{gather*}
The first three right diagonals of the epichristoffel tree in Example \ref{ex5.1} produce epichristoffel words of lengths $3n+4, 6n+5$ and $9n+9$ for $n = 1, 2, 3 \ldots$
\begin{gather*}
R_1 = \{(1,2,4), (1,3,6), \ldots ,(1,n+1,2n+2), \ldots\}\\
R_2 = \{(2,3,6), (2,5,10), \ldots, (2,2n+1,4n+2), \ldots\}\\
R_3 = \{(3,5,10), (3,8,16), \ldots ,(3,3n+2,6n+4), \ldots\}
\end{gather*}
\end{example}
\begin{remark}
    The computation of the left diagonals $L_1,L_2,L_3,\ldots$ and the right diagonals $R_1,R_2,R_3,\ldots$ follow from Definition 4.4. Consider the Stern-Brocot tree in Example 5.1. Here $L_1$ denotes the 1st term from each level beginning at the first level, that is, ${(1,2,4),(2,3,6),(3,4,8),\ldots}$. $L_2$ denotes the 2nd term from each level. Since the first level has only one term, we begin at the next level, that is, ${(1,3,6),(3,5,10),(5,7,14),\ldots}$ and so on. 
    $R_1$ denotes the 1st term taken from the end of each level, that is ${(1,2,4),(1,3,6),(1,4,8),\ldots}$.
$R_2$ denotes the 2nd term taken from the end of each level, that is, ${(2,3,6),(2,5,10),\ldots}$ and so on. 

\end{remark}
\begin{theorem}
For a 3-letter alphabet, $A = \{x,y,z\}$, there exist epichristoffel words having occurences of each letter at least once of every length $n \geq 4$ except for $n = 5$.
\end{theorem}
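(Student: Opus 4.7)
The plan is to proceed by case analysis on $n$, handling the exceptional case $n=5$ separately and exhibiting explicit epichristoffel tuples with all three entries positive for every other admissible $n$. I would use Proposition \ref{prop3.1} throughout to certify that a given tuple is epichristoffel.

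For $n=5$, up to permutation of the alphabet the only $3$-tuples of positive integers summing to $5$ are $(1,1,3)$ and $(1,2,2)$. From $(1,1,3)$ one step of $T$ gives $(1,1,1)$; since all three entries are equal, any choice of maximum yields $1-1-1=-1$, a negative entry, so iteration stalls. For $(1,2,2)$ either of the two tied maxima yields a negative entry directly. Hence no epichristoffel word of length $5$ with every letter occurring exists. For $n=4$, the tuple $(1,1,2)$ satisfies $T(1,1,2)=(1,1,0)$ then $T(1,1,0)=(1,0,0)$, and the corresponding epichristoffel word is $xzyz$.

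For $n \ge 6$ I split into three cases. If $n$ is even, take $(1,1,n-2)$: iterating $T$ subtracts $2$ from the third coordinate at each step, reducing through $(1,1,0)$ to $(1,0,0)$, so the tuple is epichristoffel. If $n$ is odd with $n \not\equiv 5 \pmod 6$, equivalently $n-3 \not\equiv 2 \pmod 3$, take $(1,2,n-3)$: iterating $T$ subtracts $3$ from the third coordinate and terminates at either $(1,2,3)\to(1,2,0)\to(1,1,0)\to(1,0,0)$ or $(1,2,4)\to(1,2,1)\to(1,0,1)\to(1,0,0)$.

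The main obstacle is the remaining case of $n$ odd with $n \equiv 5 \pmod 6$ and $n \ge 11$, where neither of the simple one-parameter families above produces a valid tuple (for example, $(1,2,n-3)$ fails because iterating $T$ lands on a $(1,2,2)$ which has tied maxima both giving negatives, and $(1,1,n-2)$ fails because $n-2$ is odd). Here I invoke the epichristoffel-tree machinery of Section 5: writing $n=6k+5$ with $k \ge 1$, the tuple $(2,\,2k+1,\,4k+2)$ is the $k$-th entry of the right diagonal $R_2$ of the Stern-Brocot tree associated with the epichristoffel tree of Example \ref{ex5.1}, as computed in Example \ref{ex5.4}. Every node of an epichristoffel tree corresponds to an epichristoffel word by Theorem 5.1, so this tuple is epichristoffel; its three entries are all strictly positive and sum to $n$. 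Combining the four cases covers every $n \ge 4$ with $n \ne 5$, which completes the proof.
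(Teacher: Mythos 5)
Your proof is correct, but it takes a partly different and in one respect more complete route than the paper's. The paper handles even lengths exactly as you do, via the family $(1,1,2m)$, but it obtains the odd lengths entirely from tree diagonals: lengths $6n+1$ come from the right diagonal $R_1$ of the tree of Example~\ref{ex5.1} (lengths $3n+4$, keeping the odd ones), lengths $6n+3$ from the left diagonal $L_1$ of a second epichristoffel tree rooted at $(xzz,yzz)$ with tuple $(1,1,4)$, and lengths $6n+5$ from $R_2$, which is the one diagonal you also use. You instead dispatch all odd $n\not\equiv 5\pmod 6$ with the single one-parameter family $(1,2,n-3)$, verified directly by iterating $T$ down to the terminal patterns $(1,2,3)\to(1,2,0)\to(1,1,0)\to(1,0,0)$ and $(1,2,4)\to(1,2,1)\to(1,0,1)\to(1,0,0)$; this is more elementary and avoids constructing the auxiliary tree (and in fact your remaining tuple $(2,2k+1,4k+2)$ could likewise be checked directly by $T$, since it reduces as $(2,2k+1,4k+2)\to(2,2k+1,2k-1)\to(2,0,2k-1)\to\cdots\to(2,0,1)\to(1,0,1)\to(1,0,0)$, so the appeal to Theorem 5.1 and Example~\ref{ex5.4} is convenient but not essential). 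More significantly, you prove the exclusion at $n=5$ by exhausting the tuples $(1,1,3)$ and $(1,2,2)$ over all tie-breaking choices of $T$, whereas the paper's proof only establishes existence for the covered lengths and never argues that length $5$ is genuinely impossible; your version therefore proves the full statement, including the ``except for $n=5$'' clause, while the paper's proof leaves that half implicit. The trade-off is that the paper's argument deliberately showcases the epichristoffel-tree and diagonal machinery that is the subject of the paper, while yours buys self-containment and completeness at the cost of bypassing that machinery for most cases.
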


\begin{proof}
We first begin by observing that for any given even number $n \geq 4$,  one can determine an epichristoffel word by simply considering a tuple of the form $(1, 1, 2m)$ where $m = 1, 2, 3, \ldots$\\
It is easy to see that a tuple of the above form satisfies the condition of Proposition \ref{prop3.1}.\\
The right diagonal $R_1$ in Example \ref{ex5.4} shows the existence of epichristoffel words of length $3n+4$. Since we have already established the existence of words of even length, we can ignore the words of even length from words with length of the form $3n+4$ obtaining words with lengths $7, 13, 19, \ldots, 6n+1$,  for $n = 1, 2, 3, \ldots$\\
Constructing an epichristoffel tree with the root node as the word $(xzz, yzz)$ which corresponds the tuple (1, 1, 4), the left-diagonal $L_1$ yields tuples of the form $(n, 1, 2n+2)$ that guarantees the existence of epichristoffel words of length, $3n+3$ from which we can conclude that there exist epichristoffel words of length $6n+3$.\\
The right diagonal $R_2$ in Example \ref{ex5.4} yields epichristoffel words of length $6n+5$. Thus, the existence of epichristoffel words of lengths $6n+1, 6n+3, 6n+5$ are established and since any epichristoffel word of even length also exists, the result is established.
\end{proof}
Another way that Christoffel and epichristoffel words differ from one another is that, as seen in Lemma \ref{lem3.4}, epichristoffel words are not necessarily factorizable as a product of two epichristoffel words, in contrast to Christoffel words, which can always be factorized as a product of two Christoffel words. The epichristoffel tree can be used to identify some subclasses of epichristoffel words that can or cannot be factorized as a product of epichristoffel words, as we demonstrate below.
 
To see this, we start with an arbitrary epichristoffel word that cannot be factorized as the product of two other epichristoffel words. For instance, the epichristoffel word $xzyzzyz$, which appears as the root of the epichristoffel tree with the factorization $(xzyz,zyz)$, corresponding to the tuple (1,2,4). Here, the word $zyz$ is not epichristoffel. Consider the path that starts at the tree's root and descends to its right child, namely $(xzyzzyz,zyz)$. Concatenating the word $zyz$ to the word $xzyzzyz$ ensures that the word is still not factorizable as two epichristoffel words since $zyz$ is not an epichristoffel word. In a similar vein, the epichristoffel word $(xzyzzyz,zyz)$, when we take into consideration has its right child to be $(xzyzzyzzyz,zyz)$, which likewise cannot be factorized as the product of two epichristoffel words.
As a result, a subclass of epichristoffel words that lack the ability to be factorized as the product of two epichristoffel words can be identified with the aid of the epichristoffel tree. 

Continuing with the above example it is seen that the set:

$\{(1,2,4), (1,3,6), (1,4,8), \ldots (1,n+1,2n+2), \ldots\}$ has examples of tuples that cannot yield the product of two epichristoffel words. These tuples are precisely the right diagonal $R_1$ of the epichristoffel tree in Figure 5.

On the other hand, if we begin with a word that can be factorized as the product of two epichristoffel words, then all its descendants on the epichristoffel tree will possess a factorization as two epichristoffel words since if $(u,v)$ is such a node, with both $u$ and $v$ as epichristoffel, then $(u,uv)$ and $(uv,v)$ are factorizations of epichristoffel words as a product of epichristoffel words. We thus get the following results:

\begin{theorem}
If $w$ is an epichristoffel word that cannot be factorized as the product of two epichristoffel words then the right diagonal $R_{1}$ of the epichristoffel tree formed using $w$ as the root contains epichristoffel words that cannot be factorized as the product of two epichristoffel words. 
\end{theorem}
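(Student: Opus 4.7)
The plan is to exploit the fact that along $R_1$ every node shares the same second component $v'$ in its tree factorization. I would first analyze the root $(u',v')$: by the construction in Section 5.1 the prefix $u'$ is chosen to be the lexicographically smallest conjugate of one of the two $c$-epichristoffel factors, hence $u'$ is always an epichristoffel word. Since by hypothesis $w=u'v'$ admits no factorization into two epichristoffel words, the pair $(u',v')$ is in particular not such a factorization; combined with the fact that $u'$ is already epichristoffel, this forces $v'$ to be only $c$-epichristoffel and not itself epichristoffel.

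Next I would describe $R_1$ explicitly. The right-child operation $(X,v')\mapsto(Xv',v')$ leaves the second coordinate fixed, so $R_1$ is precisely the sequence of nodes $(u'(v')^n,v')$ for $n\geq 1$, and the associated word $u'(v')^{n+1}$ is an epichristoffel word by iterated application of Lemma \ref{lem5.1} together with the lexicographic argument used in the proof of Theorem 5.1. Every such node exhibits $v'$, which we just showed is not epichristoffel, as the second factor of its displayed tree factorization, so none of these canonical factorizations decomposes $u'(v')^{n+1}$ into two epichristoffel words; already this gives the ``contains'' part of the statement.

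To strengthen this to the assertion that no other split $P\cdot Q=u'(v')^{n+1}$ with both $P$ and $Q$ epichristoffel exists, I would argue by induction on $n$. If the cut point aligns with a position of the form $|u'|+k|v'|$, stripping the trailing copies of $v'$ reduces the question to a factorization of a shorter word on $R_1$, or of $w$ itself, into two epichristoffel words, contradicting either the inductive hypothesis or the hypothesis on $w$. If the cut point instead crosses between successive copies of $v'$, or between $u'$ and the first $v'$, one must argue using the episturmian morphism structure that produced $w$ that $P$ and $Q$ cannot both simultaneously be epichristoffel.

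The hard part will be the ``crossing'' case. Handling it rigorously requires tracking how the repeated block $v'$ interacts with the episturmian preimages of an arbitrary candidate factorization, and in particular showing that any putative epichristoffel factors $P,Q$ straddling a $v'$-boundary would have to come from an episturmian decomposition of $w$ itself, contradicting the hypothesis. Unlike the purely combinatorial suffix-alignment case, this step cannot be done by mere length-counting and will likely require a closer analysis of how the morphisms $\psi_{a_i}$ act on admissible prefixes and suffixes, which is the main obstacle I foresee in completing the proof.
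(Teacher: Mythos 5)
Your first two paragraphs are, in substance, the paper's entire proof. The paper argues exactly as you do: since $w$ admits no factorization into two epichristoffel words, the root must take the form $(u,v)$ with $v$ not epichristoffel; the right-child map fixes the second coordinate, so $R_1$ consists of the nodes $(u,v), (uv,v), (uvv,v), \ldots$; and since the second factor $v$ of each displayed factorization is not epichristoffel, these canonical factorizations are not factorizations into two epichristoffel words. The paper stops there and declares the theorem proved. Your additional observation that the first component $u'$ is epichristoffel by construction, which is what forces the non-epichristoffel factor to be $v'$, is a slightly more careful justification of a step the paper merely asserts, and is correct.

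Your third and fourth paragraphs go beyond the paper, and the gap you flag there is real but is not one the paper closes either. Ruling out the single canonical split $u'(v')^{n}\cdot v'$ does not by itself rule out every split $P\cdot Q$ of the word $u'(v')^{n+1}$ with $P$ and $Q$ both epichristoffel, which is what the statement asserts under its natural reading. Your proposed induction handles only the cuts aligned with $v'$-boundaries, and you explicitly leave the crossing case open; so as a proof of the strong claim your proposal is incomplete. However, the paper's own proof never considers non-canonical splits at all, so measured against the paper you have proved everything it proves and, in addition, correctly diagnosed the limitation of that argument. If you want to match the paper, delete paragraphs three and four; if you want to prove the strong statement, the crossing-case analysis you foresee as the main obstacle is indeed the missing ingredient, and it is missing from the published proof as well.
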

\begin{proof}
    Since $w$ cannot be factorized as the product of two epichristoffel words, constructing the \\ epichristoffel word using $w$ gives us the root node of the form $(u,v)$ where $v$ is not an epichristoffel word. \newline
    The right diagonal $R_{1}$ will then consist of words of the form $(u,v),(uv,v),(uvv,v)...$ and thus provides us with an infinite subclass of epichristoffel words that cannot be factorized as a product of two epichristoffel words.
\end{proof}

\begin{theorem}
If $w$ is an epichristoffel word that can be factorized as the product of two epichristoffel words then every word of the epichristoffel tree formed using $w$ as the root can be factorized as the product of two epichristoffel words. 
\end{theorem}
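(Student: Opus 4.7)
My plan is to induct on the depth of a node in the epichristoffel tree rooted at $w$. The base case is immediate from the hypothesis: the root $w$ admits a factorization $w = uv$ in which both $u$ and $v$ are epichristoffel, so the root node itself already exhibits the required property.

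For the inductive step, I would assume that a node of the form $(p, q)$ at some depth $d$ has the property that both $p$ and $q$ are epichristoffel words, and show that the same holds for its two children. By the construction of the tree described in Section 5.1, the left descendant is $(p, pq)$ and the right descendant is $(pq, q)$. In each case the only new component introduced is the word $pq$, which is precisely the word represented by the parent node $(p,q)$. Since $(p,q)$ lies in the epichristoffel tree, Theorem 5.1 guarantees that $pq$ is itself an epichristoffel word. Hence the left descendant $(p, pq)$ exhibits $p \cdot pq$ as a product of the two epichristoffel words $p$ and $pq$, while the right descendant $(pq, q)$ exhibits $pq \cdot q$ as a product of the two epichristoffel words $pq$ and $q$.

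Since every node of the tree is reached at some finite depth, the factorizability into two epichristoffel words propagates throughout the whole tree. The only non-trivial ingredient is the invocation of Theorem 5.1 to certify that the middle piece $pq$ remains epichristoffel at each stage; the rest is just an immediate reading of the descendant rule $(u,v) \mapsto (u, uv)$ and $(u,v) \mapsto (uv, v)$, so I do not anticipate any real obstacle in turning this sketch into a rigorous argument. One small bookkeeping remark worth making in the write-up is that the two components of each node are, by construction, the epichristoffel words in the conjugacy classes of the shorter $c$-epichristoffel pieces produced by Lemma \ref{lem5.1}, so no separate verification of lexicographic minimality is required beyond what Theorem 5.1 already supplies.
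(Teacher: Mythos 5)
Your proposal is correct and follows essentially the same route as the paper: the paper's own proof simply observes that the root appears as $(u,v)$ with both parts epichristoffel and that the descendants $(u,uv)$ and $(uv,v)$ then inherit a factorization, which is exactly the single step of your induction. Your version is slightly more careful in that it makes the induction explicit and expressly cites Theorem 5.1 to certify that the concatenation $pq$ at each node is itself epichristoffel, a point the paper leaves implicit.
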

\begin{proof}
    If $w$ is an epichristoffel word that can be factorized as the product of two epichristoffel words then it appears in the epichristoffel tree in the form $(u,v)$ where $u$ and $v$ are both epichristoffel words. \newline
    The left and right descendants respectively are $(u,uv)$ and $(uv,v)$ and thus possess a factorization as the product of two epichristoffel words.
\end{proof}
\begin{remark}
    A related question is the following: Given a tuple, is there a unique c-epichristoffel word associated to it? Proposition 3.1 gives us the method to determine for a given tuple, if there exists a c-epichristoffel word associated to it. This word is unique up to conjugation or circular shifts as the method of applying the iterations may vary. For example, the tuple (1,2,4) is a tuple corresponding to a c-epichristoffel word. The c-epichristoffel word can be obtained by applying the episturmian morphisms $\psi_{z} \psi_{y} \psi_{z}$ to the letter $x$ or the episturmian morphisms $\psi_{z} \psi_{y} \psi_{x}$ to the letter $z$. The first case yields the word $zyzzyzx$, the second case yields the word $zyzxzyz$. Although they are different, they are both conjugate to each other, that is, one can be obtained from the other by circular shifts of the letters.
\end{remark}
\section{Conclusion}

In this paper, we have continued Paquin’s study in \cite{Onagen} on epichristoffel words by extending the definition of the Christoffel tree to accommodate epichristoffel words. In doing so, we can answer some of the questions raised, such as the existence of epichristoffel words of various lengths and provide a characterization for the conjugacy class of every word appearing on the tree. The epichristoffel tree can be a useful tool in determining epichristoffel words that possess a factorization property as smaller epichristoffel words and those that do not. Further studying these trees may help provide insights into the Fraenkel conjecture, which states that there exists a unique infinite word up to letter permutation and conjugation that is balanced and has pair-wise distinct letter frequencies over a $k$-letter alphabet, which is periodic and of the form $ppp \ldots$ where $p$ is an epichristoffel word.

\nocite{*}
\bibliographystyle{eptcs}
\bibliography{generic}

\begin{thebibliography}{10}
\providecommand{\bibitemdeclare}[2]{}
\providecommand{\surnamestart}{}
\providecommand{\surnameend}{}
\providecommand{\urlprefix}{Available at }
\providecommand{\url}[1]{\texttt{#1}}
\providecommand{\href}[2]{\texttt{#2}}
\providecommand{\urlalt}[2]{\href{#1}{#2}}
\providecommand{\doi}[1]{doi:\urlalt{https://doi.org/#1}{#1}}
\providecommand{\eprint}[1]{arXiv:\urlalt{https://arxiv.org/abs/#1}{#1}}
\providecommand{\bibinfo}[2]{#2}

\bibitemdeclare{article}{Onaninvolution}
\bibitem{Onaninvolution}
\bibinfo{author}{V.~\surnamestart Berth\'{e}\surnameend},
  \bibinfo{author}{A.~\surnamestart de~Luca\surnameend} \&
  \bibinfo{author}{C.~\surnamestart Reutenauer\surnameend}
  (\bibinfo{year}{2007}): \emph{\bibinfo{title}{On an involution of Christoffel
  words and Sturmian morphisms}}.
\newblock {\slshape \bibinfo{journal}{European Journal of Combinatorics}}
  \bibinfo{volume}{29}, pp. \bibinfo{pages}{535--553},
  \doi{10.1016/j.ejc.2007.03.001}.

\bibitemdeclare{article}{Quelquesmots}
\bibitem{Quelquesmots}
\bibinfo{author}{J.-P. \surnamestart Borel\surnameend} \&
  \bibinfo{author}{F.~\surnamestart Laubie\surnameend} (\bibinfo{year}{1993}):
  \emph{\bibinfo{title}{Quelques mots sur la droite projective r\'{e}elle}}.
\newblock {\slshape \bibinfo{journal}{Journal De Théorie Des Nombres De
  Bordeaux}} \bibinfo{volume}{5}, p. \bibinfo{pages}{2351},
  \doi{10.5802/jtnb.77}.

\bibitemdeclare{article}{Onchristoffelclasses}
\bibitem{Onchristoffelclasses}
\bibinfo{author}{J.-P. \surnamestart Borel\surnameend} \&
  \bibinfo{author}{C.~\surnamestart Reutenauer\surnameend}
  (\bibinfo{year}{2006}): \emph{\bibinfo{title}{On Christoffel classes}}.
\newblock {\slshape \bibinfo{journal}{RAIRO - Theoretical Informatics and
  Applications}} \bibinfo{volume}{40}, pp. \bibinfo{pages}{15--28},
  \doi{10.1051/ita:2005038}.

\bibitemdeclare{article}{LinkingCalkin}
\bibitem{LinkingCalkin}
\bibinfo{author}{Bates \surnamestart Bruce\surnameend} (\bibinfo{year}{2010}):
  \emph{\bibinfo{title}{Linking the Calkin - Wilf and Stern-Brocot Trees}}.
\newblock {\slshape \bibinfo{journal}{European Journal of Combinatorics}}
  \bibinfo{volume}{31}(\bibinfo{number}{7}), pp. \bibinfo{pages}{1637--1661},
  \doi{10.1016/j.ejc.2010.04.002}.

\bibitemdeclare{article}{Locterms}
\bibitem{Locterms}
\bibinfo{author}{Bates \surnamestart Bruce\surnameend} (\bibinfo{year}{2010}):
  \emph{\bibinfo{title}{Locating Terms in the Stern-Brocot Tree}}.
\newblock {\slshape \bibinfo{journal}{European Journal of Combinatorics}}
  \bibinfo{volume}{31}(\bibinfo{number}{3}), pp. \bibinfo{pages}{1020--1033},
  \doi{10.1016/j.ejc.2007.10.005}.

\bibitemdeclare{article}{Observatio}
\bibitem{Observatio}
\bibinfo{author}{E.B. \surnamestart Christoffel\surnameend}
  (\bibinfo{year}{1873}): \emph{\bibinfo{title}{Observatio arithmetica}}.
\newblock {\slshape \bibinfo{journal}{Annali Di Matematica Pura Ed Applicat}}
  \bibinfo{volume}{6}, pp. \bibinfo{pages}{148--152}, \doi{10.1007/BF02420125}.

\bibitemdeclare{article}{Onagen}
\bibitem{Onagen}
\bibinfo{author}{Paquin \surnamestart Genevieve\surnameend}
  (\bibinfo{year}{2009}): \emph{\bibinfo{title}{On a generalization of
  Christoffel words: epichristoffel words}}.
\newblock {\slshape \bibinfo{journal}{Theoretical Computer Science}}
  \bibinfo{volume}{410}(\bibinfo{number}{38}), pp. \bibinfo{pages}{3782--3791},
  \doi{10.1016/j.tcs.2009.05.014}.

\bibitemdeclare{article}{Episturmianwords}
\bibitem{Episturmianwords}
\bibinfo{author}{G.~Pirillo \surnamestart J.~Justin\surnameend}
  (\bibinfo{year}{2002}): \emph{\bibinfo{title}{Episturmian words and
  episturmian morphisms}}.
\newblock {\slshape \bibinfo{journal}{Theoretical Computer Science}}
  \bibinfo{volume}{276}, pp. \bibinfo{pages}{281--313},
  \doi{10.1016/S0304-3975(01)00207-9}.

\bibitemdeclare{book}{Combonwords}
\bibitem{Combonwords}
\bibinfo{author}{Berstel \surnamestart Jean\surnameend}, \bibinfo{author}{Aaron
  \surnamestart Lauve\surnameend}, \bibinfo{author}{Christophe \surnamestart
  Reutenauer\surnameend} \& \bibinfo{author}{Franco \surnamestart
  Saliola\surnameend} (\bibinfo{year}{2008}):
  \emph{\bibinfo{title}{Combinatorics on Words: Christoffel Words and
  Repetitions in Words}}.
\newblock \bibinfo{publisher}{American Mathematical Society}.

\bibitemdeclare{article}{Sturmianmorphisms}
\bibitem{Sturmianmorphisms}
\bibinfo{author}{C.~\surnamestart Kassel\surnameend} \&
  \bibinfo{author}{C.~\surnamestart Reutenauer\surnameend}
  (\bibinfo{year}{2007}): \emph{\bibinfo{title}{Sturmian morphisms, the braid
  group B4, Christoffel words and bases of F2}}.
\newblock {\slshape \bibinfo{journal}{Annali di Matematica}}
  \bibinfo{volume}{186}, pp. \bibinfo{pages}{317--339},
  \doi{10.1007/s10231-006-0008-z}.

\bibitemdeclare{book}{Algebraic}
\bibitem{Algebraic}
\bibinfo{author}{M.~\surnamestart Lothaire\surnameend} (\bibinfo{year}{2002}):
  \emph{\bibinfo{title}{Algebraic Combinatorics on Words}}.
\newblock \bibinfo{publisher}{Cambridge University Press},
  \doi{10.1017/CBO9781107326019}.

\bibitemdeclare{article}{Somecombprop}
\bibitem{Somecombprop}
\bibinfo{author}{A.~\surnamestart de~Luca\surnameend} \&
  \bibinfo{author}{F.~\surnamestart Mignosi\surnameend} (\bibinfo{year}{1994}):
  \emph{\bibinfo{title}{Some combinatorial properties of Sturmian words}}.
\newblock {\slshape \bibinfo{journal}{Theoretical Computer Science}}
  \bibinfo{volume}{136}, pp. \bibinfo{pages}{361--385},
  \doi{10.1016/0304-3975(94)00035-H}.

\bibitemdeclare{book}{Fromchris}
\bibitem{Fromchris}
\bibinfo{author}{Christophe \surnamestart Reutenauer\surnameend}
  (\bibinfo{year}{2018}): \emph{\bibinfo{title}{From Christoffel Words to
  Markoff Numbers}}.
\newblock \bibinfo{publisher}{Oxford University Press},
  \doi{10.1093/oso/9780198827542.001.0001}.

\bibitemdeclare{book}{Balanceprop}
\bibitem{Balanceprop}
\bibinfo{author}{Lama \surnamestart Tarsissi\surnameend}
  (\bibinfo{year}{2017}): \emph{\bibinfo{title}{Balance properties on
  Christoffel words and applications}}.
\newblock \bibinfo{publisher}{General Mathematics [math.GM], Universite
  Grenoble Alpes}.

\end{thebibliography}
\end{document}